\newcommand*{\trim}[1]{%
  \trim@spaces@noexp{#1}%
}
\newcommand{\NN}{\mathbb{N}}
\newcommand{\PP}{\mathbb{P}}
\newcommand{\RR}{\mathbb{R}}
\newcommand{\cC}{\mathcal{C}}
\newcommand{\cI}{\mathcal{I}}
\newcommand{\cM}{\mathcal{M}}
\newcommand{\cP}{\mathcal{P}}
\newcommand{\cQ}{\mathcal{Q}}
\newcommand{\cR}{\mathcal{R}}
\newcommand{\cY}{\mathcal{Y}}
\newcommand{\sP}{\mathscr{P}}
\newcommand{\sT}{\mathscr{T}}
\newcommand{\sX}{\mathscr{X}}
\newcommand{\sY}{\mathscr{Y}}
\DeclareMathOperator*{\argmax}{arg\,max}
\let\eps\varepsilon
\let\to\longrightarrow
\newcommand{\indic}{\mathbbm{1}}
\newcommand{\conv}[2][\ ]{\overset{#1}{\underset{#2}{\to}}}
\newcommand{\aseq}[1]{\underset{#1}{=}}
\newcommand{\equi}[1]{\underset{#1}{\sim}}
\newcommand{\mbf}[1]{\mathbf{#1}}
\theoremstyle{plain}
\newtheorem{thm}{Theorem}[section]
\newtheorem{prop}[thm]{Proposition}
\theoremstyle{definition}
\newtheorem{defi}[thm]{Definition}
\newtheorem{rem}[thm]{Remark}
\newtheorem{nota}[thm]{Notation}
\newtheorem{assu}[thm]{Assumption}
\theoremstyle{remark}
\begin{document}

\begin{frontmatter}
\title{Properly constrained reference priors decay rates for efficient and robust posterior inference}
\runtitle{Properly constrained reference priors}

\begin{aug}
\author[A]{\fnms{Antoine}~\snm{Van Biesbroeck}\ead[label=e1]{antoine.van-biesbroeck@polytechnique.edu}}
\address[A]{CMAP, CNRS, École polytechnique and Université Paris-Saclay, CEA,  France\printead[presep={;\\ }]{e1}}

\runauthor{A. Van Biesbroeck}
\end{aug}

\begin{abstract}
In Bayesian analysis, reference priors are widely recognized for their objective nature. Yet, they often lead to intractable and improper priors, which complicates their application.
Besides, informed prior elicitation methods are penalized by the subjectivity of the choices they require. %
In this paper, we aim at proposing a reconciliation of the aforementioned aspects. Leveraging the objective aspect of reference prior theory, we introduce two strategies of constraint incorporation to build tractable reference priors.
One provides a simple and easy-to-compute solution when the improper aspect is not questioned, and the other introduces constraints to ensure the reference prior is proper, or it provides proper posterior.
Our methodology emphasizes the central role of Jeffreys prior decay rates in this process, and the practical applicability of our results is demonstrated using an example taken from the literature.
\end{abstract}

\begin{keyword}[class=MSC]
\kwd[Primary ]{62F15}
\kwd[; secondary ]{62A01}
\end{keyword}

\begin{keyword}
\kwd{Prior selection}
\kwd{Constraints}
\kwd{Mutual information}
\kwd{Reference prior}
\kwd{Jeffreys prior}
\end{keyword}

\end{frontmatter}


\section{Introduction}

Bayesian analysis offers a coherent framework for integrating prior information and updating beliefs with data. The incorporation of prior knowledge can be motivated by various needs, such as enhancing interpretability, introducing uncertainty, or embedding existing knowledge into the analysis. 
Actually, the selection of this \emph{a priori} centralizes a lot of attention in Bayesian inference,  and can be a critical aspect of the method. Indeed, it can significantly influence the posterior distribution and, consequently, the conclusions drawn from the analysis. Thus,  an inconsistent choice of prior can jeopardize the validity of the entire study, a challenge underscored by \citet{Gelman_2021}.

Various methods for prior elicitation have been developed to address this issue, offering different strategies for different purposes. Comprehensive reviews of them are proposed in \citep{Mikkola2023} and in \citep{Consonni2018}. Despite their variety, many of them involve subjective choices that remain open to criticism. 
Indeed, prior elicitation methods generally consist of choosing a way to inform a prior.  They are
often based on the selection of a class and hyperparameters that remain to be tuned.
Among the different challenges that are commonly addressed, we can emphasize the ease of posterior inference. 
Conjugate priors, for instance, are designed to provide a simple formulation of the posterior. Penalised Complexity (PC) priors \citep{Simpson2017} and sparse priors \citep{Castillo2015} are suitable for high-dimensional problems; and g-priors \citep{Liang2008} are used for variable selection in normal regression models. \\
In order to provide a better interpretability of the information transmission process, several studies favor a hierarchical design, in which the prior is the result of a latent prior modeling.
This approach allows the use 
of historical data, as in power priors \citep{Chen2006} and Information Matrix priors \citep{Gupta2009}; or even  imaginary data  \citep{Perez2002,Spitzner2011}.
Posterior priors, developed for sensitivity analysis studies \citep{Bousquet2023}, are also part of this  approach with imaginary data.
These hierarchical constructions are appreciated for their ability to tackle the improper aspect of most low-informative priors. Actually, they shift the focus to a ‘prior on the prior’ that must also be elicited.

When the context or the modeling requires %
a prior without
the possibility of incorporating uncritical beliefs, it is common to move towards  priors that are low-informative.
The reference prior theory arises in this case as a cornerstone. The theory provides a formal mechanism to incorporate prior information in a way that maximizes the information gained from the data within the issued \emph{a posteriori} quantities. This process ensures that the prior can be qualified as `objective'.
Extensively developed and formalized \citep{Bernardo1979a,Berger2009,Mure2018}, the reference priors are elected for their lack of subjectivity in various practical
studies \citep{Chen2008,Gu2016,Dandrea2021,VanBiesbroeck2023} and statistical models \citep{Paulo2005,Mure2019,Natarajan2000}. Yet, they are criticized for their  low computational feasibility.
As a matter of fact, the reference priors have been proven to generally lead to Jeffreys priors in asymptotic frameworks \citep{Clarke1994,Liu2014,VanBiesbroeckBA2023}, or hierarchical versions of them in  high-dimensional problems \citep{Bernardo1979,Bernardo2005,Gu2019}.
Despite their  objective nature,  their implementation is often cumbersome and not always recommended in high dimensions \citep{Berger2015}. %
Moreover, the low-informative nature of these priors is associated with their common improper aspect, necessitating careful handling to ensure valid statistical inference.

Thus, the construction of priors is expected to strike a balance between several criteria. While many works restrict their sets of priors to ones that are tractable, proper, or suitable for high dimensions, others seek to minimize any source of subjectivity.
This paper aims to reconcile all these criteria to improve the prior elicitation.
Our contribution takes the form of an enrichment of the reference prior theory to leverage the objective aspect that it provides to reference priors. Building on \citep{VanBiesbroeckBA2023},
we restrict priors to the ones that belong to  well-chosen ---and not too restrictive--- classes, and we introduce two strategies to define  convenient reference priors.
Our first strategy provides a simple,  tractable solution for constraining reference priors when the improper aspect is not questioned. The second, by contrast, introduces constraints that lead to  reference priors that are proper, or lead to proper posteriors. 
For both strategies, we try to define the potential loss of objectivity induced by the constraints and we discuss their limits.
Our results emphasize the central role of Jeffreys prior decay rates when they are improper.
Additionally, we draw attention to the fact that our methodology opens a way to define various reference priors on the basis of constraints that could result from any other motivation.

The work presented in this paper is motivated by practical considerations and based on rigorous theoretical foundations. In particular, we present a revisited Bayesian framework in Section \ref{sec:framework}. This section does not represent the main contribution of the paper (which is the derivation of reference priors under well-chosen constraints), but it does define the fundamental tools we use. 
It also proposes a complete modeling for Bayesian statistics that differs from usual Bayesian frameworks 
in that it systematically and consistently takes into account improper priors.
We suggest that readers particularly interested in the practical aspects of this work focus on the sections that follow, while readers intrigued by the detailed mathematical background are invited to explore that section. %
The rest of the paper begins with
a review of the reference prior theory and of the definitions of the reference priors in Section \ref{sec:refpriors}. It precedes the development  of the main contribution of the paper.
Indeed, Section \ref{sec:constraintsection} motivates our work and expresses our results about constrained reference priors, which are then discussed in Section \ref{sec:discussion}.
Eventually, the claimed practical aspect of our study is demonstrated by the application of our method to an example taken from  the literature in Section \ref{sec:example}. Detailed mathematical proofs are compiled in Section \ref{sec:proofs}, followed by a conclusion in Section \ref{sec:conclusion}.

\section{Bayesian framework and improper priors}\label{sec:framework}

The purpose of statistics is to observe real phenomena, and to construct their interaction with probability theory. Thus, since the early work of \citet{kolmogorov1933foundations}, it is appreciated to be reminded that there exists an abstract, underlying structure from which result the probabilistic description and tools that are used to perform inference. This structure, which takes the form of a probability space in the axiomatic of Kolmogorov, represents the indescribable phenomena that induce randomness in the observations.
In this section, we propose such a  construction for Bayesian statistics, in a way that allows priors to be improper.

The Bayesian framework is standardly defined from a collection of probability measures $(\PP_{Y|\theta})_{\theta\in\Theta}$ that constitutes a statistical model, alongside with a prior $\pi$ on $\Theta$ ---namely, a probability distribution---. It is therefore possible to construct a probability space $\mbf\Omega$ that models the observations: they are modeled as the independent realizations of a random variable $Y\in\cY$, with $Y$ being distributed according to $\PP_{Y|\theta}$ conditionally to $T=\theta$ where $T$ is a random variable with distribution  $\pi$.  

However, it is common when it resorts to Bayesian analysis to allow a construction based on an `improper' prior, namely, a prior defined from a $\sigma$-finite measure, whose total mass is not necessarily finite. 
Their consideration rests essential within the reference prior theory because %
the reference priors are known to be often improper.
It is not singular to allow such priors ---even though it is recommended that they are limits\footnote{Different authors in the literature propose different ways to define such limits; a review of them is not the purpose of this study.} of proper priors--- 
yet the usual Kolmogorov's axiomatics does not take  this possibility into account to construct the underlying space that provides the usual statistical modeling. 

Our suggestion for the construction of a Bayesian framework to model a Bayesian statistical system which allows the insertion of improper priors is based on the work of \citet{Taraldsen2016}, who adapted the propositions of \citet{renyi1970foundations}. Their idea is to extend the notion of a probability space when the probability is only a $\sigma$-finite measure. Up to a multiplicative constant, their framework is shown to coincide with standard probability spaces when the measure has a finite total mass, or when conditioning on a set with finite mass.

We let $(\Theta,\sT)$ be a measurable set. If $\pi_1$ and $\pi_2$ are two non-null, non-negative and $\sigma$-finite  measures we can define the relation $\simeq$ by
    \begin{equation}
        \pi_1\simeq\pi_2\Longleftrightarrow\exists\delta>0,\,\pi_1=\delta\pi_2.
    \end{equation}
Define by $\cM$ the space of non-null, non-negative and $\sigma$-finite measures on $\Theta$. The following properties are verified:
    \begin{itemize}
        \item The relation $\simeq$ is an equivalence relation on $\cM$. We note $[\pi]$ the class of a $\pi\in\cM$.
        \item Denote by $\cM^\nu$ the space of absolutely continuous measures w.r.t. $\nu\in\cM$. It is stable by $\simeq$.
        \item Define by $\cR$ the space of non-negative non-null measurable functions from $\Theta$ to $\RR$. The relation%
            \begin{equation}
                f\propto g \Longleftrightarrow\exists\delta>0,\,f=\delta g%
            \end{equation}
        is an equivalence relation on $\cR$. %
        Let $\nu\in\cM$, the Radon-Nikodym Theorem defines a natural surjection from the quotient space $\cR/\!\propto$ to $\cM^\nu/\!\simeq$. 
        A class in $\cR/\!\propto$ is called a density class of the unique class in $\cM^\nu/\!\simeq$ that it induces by this mapping. If two elements $[f]$, $[g]$ of $\cR/\!\propto$ are density classes of a same element in $\cM^\nu/\!\simeq$, they are equal $\nu$-a.e. in the sense $\exists\delta>0$, $f=\delta g$ $\nu$-a.e.
    \end{itemize}

\citet{Bioche2016} consider such a definition of $\sigma$-finite measures equal up-to-a constant to formalize the notion of prior and posterior approximations, whether they are proper or improper. Notably, they define a topology induced by a convergence in $\cM/\!\simeq$: the Q-vague convergence. %

A general prior is then defined as a class in $\cM/\!\simeq$.
The tuple $(\Theta,\sT,[\pi])$ corresponds to what \citet{Taraldsen2016} call a conditional probability space for any $[\pi]\in\cM/\!\simeq$.
For any $U \in \sT$ such that 
$\pi(U)\in (0,+\infty)$, 
that space issues a unique probability space on $U$, with the probability $\pi(\cdot|U)=\pi(\cdot\cap U)/\pi(U)$, it is independent of the choice of the representative of $[\pi]$.
This framework gives what is necessary to construct a general conditional probability space that models our problem, as we express below.

\begin{prop}\label{prop:kolmog}
We assume that the observations take values in a 
Polish space $(\cY,\sY)$ and are statistically modeled by the collection of conditional probabilities $(\PP_{Y|\theta})_{\theta\in\Theta}$ with $(\Theta,\sT)$ being a Polish space and such that $\forall A\in\sY$, $\theta\mapsto\PP_{Y|\theta}(A)$ is measurable. We consider $\pi\in\cM$.\\
Then there exist a conditional probability space $(\mbf\Omega, \mbf\Xi, [\mbf\Pi])$, a measurable process $\overline Y=(Y_i)_{i\in\NN}$, $Y_i:\mbf\Omega\to\cY$, and a measurable function $T:\mbf\Omega\to\Theta$ such that
    \begin{enumerate}
        \item for any $U\in\sT$, $\mbf\Pi(T\in U)=\pi(U)$; 
        \item if 
        $\pi(U) \in  (0,+\infty)$ 
        then the unique probability space conditioned on $\{T\in U\}$ is such that the conditional probability of any $(Y_{i_l})_{l=1}^k$ to $T=\theta$ is $\PP_{Y|\theta}^{\otimes k}$. More explicitly, if $A_{j_1},\dots,A_{j_k}\in\sY$ then
            \begin{equation}
                \mbf\Pi\Big(\bigcap_{l=1}^kY_{j_l}\in A_{j_l}|T\in U\Big) = \int_\Theta\PP_{Y|\theta}^{\otimes k}(A_{j_1}\times\dots\times A_{j_k})d\pi(\theta| U).
            \end{equation}
    \end{enumerate}
\end{prop}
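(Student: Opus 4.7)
The plan is to construct $(\mbf\Omega, \mbf\Xi, [\mbf\Pi])$ canonically as a product of the parameter space and a sequence space. I would take $\mbf\Omega = \Theta \times \cY^\NN$ equipped with the product $\sigma$-algebra $\mbf\Xi = \sT \otimes \sY^{\otimes \NN}$, define $T$ as the projection onto $\Theta$, and let each $Y_i$ be the $i$-th coordinate projection onto $\cY$; measurability of these maps is automatic. It remains to build a $\sigma$-finite $\mbf\Pi$ that fibres correctly: for every $\theta$, the Ionescu--Tulcea theorem produces a unique probability $\PP_{Y|\theta}^{\otimes \NN}$ on $(\cY^\NN, \sY^{\otimes \NN})$, and I would set
\begin{equation}
\mbf\Pi(B) = \int_\Theta \PP_{Y|\theta}^{\otimes \NN}(B_\theta)\, d\pi(\theta), \quad B_\theta = \{y \in \cY^\NN : (\theta, y) \in B\},
\end{equation}
for every $B \in \mbf\Xi$.

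For this definition to make sense I need to verify that the integrand is $\sT$-measurable in $\theta$ for every $B \in \mbf\Xi$. On finite-dimensional measurable rectangles $U \times A_{j_1} \times \cdots \times A_{j_k} \times \cY^\NN$ this reduces to the measurability of $\theta \mapsto \indic_U(\theta)\prod_{l} \PP_{Y|\theta}(A_{j_l})$, which follows directly from the kernel hypothesis on $\theta \mapsto \PP_{Y|\theta}(A)$. I would then extend to the full product $\sigma$-algebra via a monotone class / Dynkin $\pi$--$\lambda$ argument on the algebra of cylinder sets, using crucially that $\sY^{\otimes \NN}$ is generated by the Polish-generated cylinders. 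Once this measurability is established, $\mbf\Pi$ is a well-defined non-negative measure on $\mbf\Xi$, and it is $\sigma$-finite because a $\pi$-exhaustion $(\Theta_n)_n$ of $\Theta$ yields an $\mbf\Pi$-exhaustion $(\Theta_n \times \cY^\NN)_n$ of $\mbf\Omega$ with $\mbf\Pi(\Theta_n \times \cY^\NN) = \pi(\Theta_n) < \infty$. Hence $(\mbf\Omega, \mbf\Xi, [\mbf\Pi])$ is a conditional probability space in the sense recalled before the statement.

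Properties (1) and (2) would then follow directly from the construction. For (1), applying the definition to $B = T^{-1}(U)$ gives $\mbf\Pi(T \in U) = \int_U \PP_{Y|\theta}^{\otimes \NN}(\cY^\NN)\, d\pi(\theta) = \pi(U)$. For (2), when $\pi(U) \in (0, +\infty)$ the conditional measure $\mbf\Pi(\,\cdot\,|T \in U) = \mbf\Pi(\,\cdot\, \cap \{T \in U\})/\pi(U)$ is a bona fide probability, and for any cylinder the independence of the coordinates under $\PP_{Y|\theta}^{\otimes \NN}$ yields
\begin{equation}
\mbf\Pi\Bigl(\bigcap_{l=1}^k Y_{j_l} \in A_{j_l},\, T \in U\Bigr) = \int_U \PP_{Y|\theta}^{\otimes k}(A_{j_1} \times \cdots \times A_{j_k})\, d\pi(\theta);
\end{equation}
dividing by $\pi(U)$ gives the announced identity, and passing to the class $[\mbf\Pi]$ kills the multiplicative constant so the statement does not depend on the chosen representative. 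The main obstacle I foresee is the measurability step in the second paragraph: the Polish assumption ensures that $\sY^{\otimes \NN}$ is countably generated, so that the monotone class argument closes cleanly; without it one would need a more delicate separability argument and the construction of the kernel $\theta \mapsto \PP_{Y|\theta}^{\otimes \NN}$ would be less straightforward. Everything else reduces to Fubini-style bookkeeping.
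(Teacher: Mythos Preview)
Your construction is correct and is in fact the standard kernel--measure (skew product) construction: build the i.i.d.\ product law $\PP_{Y|\theta}^{\otimes\NN}$ on $\cY^\NN$ for each $\theta$, check via a $\pi$--$\lambda$ argument that $\theta\mapsto\PP_{Y|\theta}^{\otimes\NN}(B_\theta)$ is measurable for every $B\in\sT\otimes\sY^{\otimes\NN}$, and integrate against $\pi$. The $\sigma$-finiteness and properties (1)--(2) then drop out exactly as you describe. One cosmetic remark: invoking Ionescu--Tulcea is overkill here, since the coordinates are i.i.d.\ and the ordinary infinite product measure theorem (or Kolmogorov extension on the Polish $\cY$) already yields $\PP_{Y|\theta}^{\otimes\NN}$; but this does not affect correctness.

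The paper takes a genuinely different, and more indirect, route. It first partitions $\Theta$ into countably many pieces $U_i$ of finite $\pi$-mass, builds on each piece an abstract probability space $(\Omega^i,\sX^i,\PP^i)$ via Kolmogorov extension applied to the finite-dimensional joint laws of $(T,\mbf Y_k)$ under $\pi(\cdot|U_i)$, and then glues these together by taking $\mbf\Omega = I\times\prod_i\Omega^i$ with a Carath\'eodory-extended measure that weights each factor by $\pi(U_i)$. Your canonical product space $\Theta\times\cY^\NN$ avoids all of this gluing machinery and the associated limit computations the paper performs to identify $\mbf\Pi$ on non-cylinder sets. What the paper's approach buys is that every building block is a genuine probability space, so one never writes an integral against an infinite measure until the very end; what your approach buys is brevity and transparency, at the cost of integrating a kernel against a $\sigma$-finite $\pi$ from the outset---which is entirely benign here.
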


Such existence of the underlying abstract space $\mbf\Omega$ provides the desired interpretability of the modeling. It can be seen as the source of the uncontrollable randomness that goes beyond the observations and the parameter. It is proven in Section \ref{sec:proofs}. One can notice that it is consistent with usual Bayesian frameworks when $\pi$ is proper.

\paragraph{Notations and assumptions for the remaining of the manuscript}
In this manuscript, the assumptions of Proposition \ref{prop:kolmog} are supposed to be verified. It will always be assumed that $\Theta\subset\RR^d$, for some $d\geq1$. It is equipped with the Borel $\sigma$-algebra denoted by $\sT$.
We restrain our study to the priors that admit locally bounded and a.e. continuous densities w.r.t. the Lebesgue measure on $\Theta$ denoted by $\varrho$. Indeed, the space $\cR^\cC$ of measurable, locally bounded and a.e. continuous functions from $\Theta$ to $\RR$ is stable by $\propto$, so that we can adopt the following notations in the manuscript.

\begin{nota}
    In this work `a prior' will refer to an element in $\cM^\varrho_\cC/\!\simeq$, the subspace in $\cM^\varrho/\!\simeq$ associated to $\cR^\cC/\!\propto$. 
    A class of priors is therefore a subset of $\cM^\varrho_\cC/\!\simeq$.
    In general, it will be common in this work to denote by $\pi$ a prior.  When it is proper, it will be natural to associate it with a representative of itself that has a total mass equal to $1$. Otherwise, we naturally extend some notations to $\pi$:
    \begin{itemize}
        \item[(i)] if $F:\cM^\varrho_\cC\to\RR$ we can define $F(\pi)$ as the subset in $\RR$ of evaluations of $F$ in the representatives of $\pi$. Then, the statement ``$F(\pi)=c$'' stands for ``$x=c\,\forall x\in F(\pi)$''.  
        Examples are $\pi(U)\subset[0,\infty]$ for a $U\in\sT$, or $\int_\Theta f(\theta)\pi(\theta)d\theta\subset[0,\infty]$ for a $f:\Theta\to[0,\infty)$ measurable (the latter expresses the set composed by the $\int_\Theta f(\theta)f_\pi(\theta)d\theta$ for any $f_\pi$ being a density of a representative of $\pi$). 
        \item[(ii)] if $f\in\cR^\cC$, we write $\pi(\theta)\propto f(\theta)$ to express that $\pi$ is the uniquely defined class of prior issued by the class of $f$ in $\cR^\cC/\!\propto$.         
        \item[(iii)] for any compact $U\subset\Theta$, we have $\pi(U)<\infty$ because the representatives of $\pi$ admit locally bounded densities. %
        The conditional probability $\pi(\cdot|U)$ on $U$ is called the re-normalized restriction of $\pi$ to $U$. %
    \end{itemize}
\end{nota}

The problem is assumed to admit a likelihood: there exist probability density functions $(\ell(\cdot|\theta))_{\theta\in\Theta}$ with respect to a common measure $\mu$ on $\cY$ such that
    \begin{align}
       \text{for $\rho$-a.e. $\theta$, } \forall A\in\sY,\, \PP_{Y|\theta}(A)=\int_A\ell(y|\theta)d\mu(y). %
    \end{align}

For given $\pi$, $k\geq1$ and $\mbf y\in\cY^k$, the posterior distribution is uniquely defined in  $\cM^{\varrho}/\!\simeq$ by its density $p(\cdot|\mbf y)$:

        \begin{align}\label{eq:posteriordef}
        &
        p(\theta|\mbf y) \propto\prod_{i=1}^k\ell(y_i|\theta)\pi(\theta)\propto\ell_k(\mbf y|\cdot)\pi(\cdot)  .   %
    \end{align}
The marginal distribution $\PP_{\mbf Y_k}$ is defined as the pushforward measure of $\mbf\Pi$ by $\mbf Y_k$. If the posterior in Equation (\ref{eq:posteriordef}) is proper for $\PP_{\mbf Y_k}$-a.e. $\mbf y$, the marginal distribution is uniquely defined by the density $p_{\mbf Y_k}$ w.r.t. $\mu^{\otimes k}$:
    \begin{equation}
        \forall\mbf y\in\cY^k,\,
        p_{\mbf Y_k}(\mbf y) = \int_\Theta p(\theta|\mbf y) d\theta = \int_\Theta\ell_k(\mbf y|\theta)\pi(\theta)d\theta.
    \end{equation}

Such a framework, under classical regularity assumptions of the likelihood w.r.t. $\theta$ lets the Fisher information matrix $\cI(\theta) = (\cI(\theta)_{i,j})_{i,j=1}^d$ be defined as:
\begin{equation}
 \cI(\theta)_{i,j} = - \int_\cY[\partial^2_{\theta_i\theta_j}\log\ell(y|\theta)]\, \ell(y|\theta)d\mu(y).
\end{equation}
The Jeffreys prior is defined in $\cM^{\varrho}/\!\simeq$ by its density $J$: $J(\theta)\propto\sqrt{|\det\cI(\theta)|}$.

\section{Reference prior definitions}\label{sec:refpriors}

In this section, we review the definitions of the reference priors, in order to suggest a suitable framework for the introduction of constraints.
This paper takes as a support the extended reference prior theory proposed in \citep{VanBiesbroeckBA2023}. In our work, we consider mutual information $I_{D_\alpha}$ defined by an $\alpha$-divergence, $\alpha\in(0,1)$, as a dissimilarity measure:
    \begin{align}
        &I_{D_\alpha}(\pi|k) = \int_\Theta D_\alpha(\PP_{Y|\theta}^{\otimes k}||\PP_{\mbf Y_k}) \pi(\theta)d\theta \label{eq:mutual}\\
        \nonumber\text{with}\quad 
            &D_\alpha(\PP_{Y|\theta}^{\otimes k}||\PP_{\mbf Y_k}) = \int_{\cY} f_\alpha\left(\frac{p_{\mbf Y_k}(\mbf y)}{\ell_k(\mbf y|\theta)}\right)\ell_k(\mbf y|\theta) d\mu^{\otimes k}(\mbf y),\nonumber %
    \end{align}
with $f_\alpha(x) = \frac{x^\alpha-\alpha x-(1-\alpha)}{\alpha(\alpha-1)}$. %
This quantity is ensured to have a sense when $\pi$ is proper, it will not be manipulated in other cases. %

The principle of the reference prior theory is to maximize w.r.t. $\pi$ the mutual information as expressed in Equation (\ref{eq:mutual}). %
The $D_\alpha$-reference priors are defined as:  %

\begin{defi}[$D_\alpha$-reference prior \citep{VanBiesbroeckBA2023}] \label{defi:refprior}
    Let $\cP$ be a class of priors. A prior $\pi^\ast\in\cP$ is called a $D_\alpha$-reference prior over the class $\cP$ with rate $\varphi(k)$ if there exists an openly increasing sequence of compact subsets $(\Theta_i)_{i\in \NN}$ with $\Theta_i\subset\Theta$, $\bigcup_{i\in \NN}\Theta_i=\Theta$, $\pi^\ast(\Theta_i)>0$ such that
        \begin{equation}\label{eq:limitmutinfo}
            \forall i\in \NN,\,\forall\pi\in\cP_i,\, \lim_{k\rightarrow\infty}\varphi(k)(I_{D_\alpha}(\pi_i^\ast|k)-I_{D_\alpha}(\pi_i|k)) \geq0,
        \end{equation}
    where $\pi_i$ and $\pi_i^\ast$ respectively denote the re-normalized restrictions of $\pi$ and $\pi^\ast$ to $\Theta_i$, and $\cP_i=\{\pi\in\cP,\,\pi(\Theta_i)>0\}$.
\end{defi}
We recall that the historical definition of reference priors made by \citet{Bernardo1979} is consistent with the above definition when the Kullback-Leibler divergence is used instead of the divergence $D_\alpha$ in the mutual information. Considering $D_\alpha$ can be seen as an extension of that historical definition, as it is known to tend toward the Kullback-Leibler divergence when $\alpha\to0$.

We also precise that some studies define the reference priors differently when the dimension of $\Theta$ is higher than $1$. Their definition corresponds to the hierarchical construction we present in Section \ref{sec:issuesJeff}.

Definition \ref{defi:refprior} actually slightly differs from the one in \citep{VanBiesbroeckBA2023} as we require here a property to be verified by the sequence $(\Theta_i)_{i\in\NN}$. A sequence $(\Theta_i)_{i\in \NN}$ is said to be openly increasing if there exists $i_0\geq0$ and a sequence $(V_i)_{i\geq i_0}$ of open subsets of $\Theta$ such that for any $i\geq i_0$:
    \begin{equation}
        \Theta_i\subset V_i\subset\Theta_{i+1}.
    \end{equation}

Below we recall a result \citep[Theorem 1]{VanBiesbroeckBA2023} which, under appropriate assumptions, gives an expression of the limit involved in Equation (\ref{eq:limitmutinfo}) when $\Theta$ is compact:

\begin{thm}[\citeauthor{VanBiesbroeckBA2023} \citeyear{VanBiesbroeckBA2023}]\label{thm:l(pi)}
    Suppose $\Theta$ to be compact and $\pi$ be a  prior. Then the $D_\alpha$-mutual information admits a limit:
        \begin{equation}
            \lim_{k\rightarrow\infty} k^{d\alpha/2} I_{D_\alpha}(\pi|k) = l(\pi), \qquad l(\pi) = C_\alpha \int_{\Theta}\pi(\theta)^{1+\alpha} |\cI(\theta)|^{-\alpha/2}  d\theta ,
        \end{equation}
        with $C_\alpha= (2\pi)^{d\alpha/2} (1-\alpha)^{-d/2}/(\alpha(\alpha-1))$.
\end{thm}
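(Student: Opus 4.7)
My strategy is to reduce the mutual information to the asymptotic analysis of a posterior-density moment and then apply a Laplace / Bernstein–von Mises expansion. Substituting the explicit form of $f_\alpha$ and using that $\ell_k(\cdot|\theta)$ and $p_{\mathbf{Y}_k}$ both integrate to one against $\mu^{\otimes k}$, I first rewrite
\begin{equation*}
I_{D_\alpha}(\pi|k) \;=\; \frac{1}{\alpha(\alpha-1)}\left[\int_\Theta \pi(\theta)\int_\cY p_{\mathbf{Y}_k}(\mathbf{y})^\alpha \ell_k(\mathbf{y}|\theta)^{1-\alpha}\,d\mu^{\otimes k}(\mathbf{y})\,d\theta \;-\; 1\right].
\end{equation*}
Applying Bayes' formula $\pi(\theta|\mathbf{y}) = \ell_k(\mathbf{y}|\theta)\pi(\theta)/p_{\mathbf{Y}_k}(\mathbf{y})$ turns the inner integral into $\pi(\theta)^\alpha\,\EE_{\PP_{Y|\theta}^{\otimes k}}[\pi(\theta|\mathbf{Y}_k)^{-\alpha}]$, so the whole problem reduces to understanding this posterior-density moment as $k\to\infty$.

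The asymptotic is delivered by the Bernstein–von Mises theorem under the regularity assumptions in force on the likelihood: in $\PP_{Y|\theta}^{\otimes k}$-probability,
\begin{equation*}
\pi(\theta|\mathbf{Y}_k) \;=\; \frac{k^{d/2}\,|\cI(\theta)|^{1/2}}{(2\pi)^{d/2}}\,\exp\!\bigl(-\tfrac{1}{2}W_k^\top\cI(\theta)W_k\bigr)\,(1+o_\PP(1)),
\end{equation*}
with $W_k = \sqrt{k}(\theta-\hat\theta_k) \Rightarrow \mathcal{N}(0,\cI(\theta)^{-1})$ and consequently $W_k^\top\cI(\theta)W_k\Rightarrow \chi^2_d$. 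Since $\alpha\in(0,1)$, the chi-squared moment generating function at $\alpha/2$ is finite and equals $(1-\alpha)^{-d/2}$, yielding the pointwise asymptotic
\begin{equation*}
\EE\!\bigl[\pi(\theta|\mathbf{Y}_k)^{-\alpha}\bigr] \;\sim\; k^{-d\alpha/2}\,(2\pi)^{d\alpha/2}\,(1-\alpha)^{-d/2}\,|\cI(\theta)|^{-\alpha/2}.
\end{equation*}

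To transfer this pointwise statement through $\int_\Theta\pi(\theta)^{1+\alpha}(\,\cdot\,)d\theta$, I would invoke dominated convergence on the compact $\Theta$: the envelope combines the local boundedness of $\pi$ and the non-degeneracy of $\cI$ with a uniform-in-$\theta$ integrability bound on $\exp(\tfrac{\alpha}{2}W_k^\top\cI(\theta)W_k)$, itself obtained from uniform tail controls on the MLE. Multiplying by $k^{d\alpha/2}$ and collecting the prefactor $1/(\alpha(\alpha-1))$ reassembles the constant $C_\alpha$ and isolates the spatial integral $\int_\Theta\pi(\theta)^{1+\alpha}|\cI(\theta)|^{-\alpha/2}d\theta$, i.e.\ $l(\pi)$. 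The main technical hurdle is precisely this uniform-integrability step: a bare pointwise BvM is insufficient because the exponential moment can be stiff, especially as $\alpha\uparrow 1$. Compactness of $\Theta$ together with the standing regularity of the model are exactly what allow one to control the remainder of the Laplace expansion uniformly in $\theta$, as detailed in the companion paper \citep{VanBiesbroeckBA2023} on which this theorem is based.
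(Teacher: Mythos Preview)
This theorem is not proved in the present paper: it is explicitly recalled from \citep{VanBiesbroeckBA2023} (see the sentence introducing it, ``Below we recall a result [Theorem~1]{\ldots}''), and Section~\ref{sec:proofs} contains no argument for it. There is therefore no in-paper proof to compare your attempt against; you yourself acknowledge this by deferring the hard uniform-integrability step to the companion reference.

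That said, your sketch has a concrete gap that is worth flagging. After correctly deriving
\[
I_{D_\alpha}(\pi|k)=\frac{1}{\alpha(\alpha-1)}\Bigl[\int_\Theta \pi(\theta)^{1+\alpha}\,\EE_{\PP_{Y|\theta}^{\otimes k}}\!\bigl[\pi(\theta|\mathbf Y_k)^{-\alpha}\bigr]\,d\theta\;-\;1\Bigr],
\]
you establish that the integral is of order $k^{-d\alpha/2}$ via BvM and the $\chi^2_d$ moment generating function. But then the residual ``$-1$'' contributes $-k^{d\alpha/2}/(\alpha(\alpha-1))$ after multiplication by $k^{d\alpha/2}$, which diverges. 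Your final paragraph simply asserts that the scaling ``reassembles $C_\alpha$ and isolates the spatial integral'' without addressing this term. The resolution is that the affine part $-\alpha x-(1-\alpha)$ of $f_\alpha$ contributes a $\pi$-independent additive constant to $I_{D_\alpha}$; in the way the theorem is actually used in this paper (always through differences $I_{D_\alpha}(\pi^\ast_i|k)-I_{D_\alpha}(\pi_i|k)$, cf.\ Definition~\ref{defi:refprior} and Proposition~\ref{prop:}), this constant cancels, and only the $x^\alpha$ part of $f_\alpha$ survives and produces $l(\pi)$. Your write-up should make this cancellation explicit rather than silently drop the $-1$; as written, the limit you compute for $k^{d\alpha/2}I_{D_\alpha}(\pi|k)$ is not finite.
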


This theorem provides a clear asymptotic description of the $D_\alpha$-mutual information when $\Theta$ is compact. This way, it gives a necessary and sufficient condition for a prior to be a $
D_\alpha$-reference prior, as stated in the proposition below.

\begin{prop}\label{prop:}
    Let $\cP$ be a class of priors. $\pi^\ast\in\cP$ is a $D_\alpha$-reference prior over $\cP$ if and only if it exists an openly increasing sequence of compact sets $(\Theta_i)_{i\in\NN}$, $\pi^\ast(\Theta_i)>0$, $\bigcup_{i\in \NN}\Theta_i=\Theta$, such that for any $i\in \NN$, 
        \begin{equation}
            \pi^\ast_i\in\argmax_{\pi\in\cP_i}l(\pi_i);
        \end{equation}
    where $\cP_i=\{\pi\in\cP,\,\pi(\Theta_i)>0\}$ and with $\pi^\ast_i,\,\pi_i$ respectively denoting the re-normalized restriction of $\pi^\ast$ and $\pi$ to $\Theta_i$.
\end{prop}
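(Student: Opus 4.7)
The plan is to derive both implications as direct corollaries of Theorem \ref{thm:l(pi)}. On each compact $\Theta_i$, the renormalized restrictions $\pi_i$ and $\pi^\ast_i$ are proper priors on a compact set, so Theorem \ref{thm:l(pi)} applies to each and, by linearity of limits, yields
\begin{equation*}
\lim_{k\to\infty} k^{d\alpha/2}\bigl(I_{D_\alpha}(\pi^\ast_i|k) - I_{D_\alpha}(\pi_i|k)\bigr) = l(\pi^\ast_i) - l(\pi_i).
\end{equation*}
With the natural rate $\varphi(k) = k^{d\alpha/2}$, the defining inequality of Definition \ref{defi:refprior} is therefore equivalent to the pointwise comparison $l(\pi^\ast_i) \geq l(\pi_i)$ for every $\pi \in \cP_i$.

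For the ($\Leftarrow$) direction, I would take the openly increasing sequence $(\Theta_i)$ supplied by the hypothesis, on which $\pi^\ast_i \in \argmax_{\pi\in\cP_i} l(\pi_i)$ for every $i$. The identity above then immediately produces the limit inequality of Definition \ref{defi:refprior} with rate $\varphi(k)=k^{d\alpha/2}$ and the same sequence $(\Theta_i)$, which shows that $\pi^\ast$ is a $D_\alpha$-reference prior over $\cP$.

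For the ($\Rightarrow$) direction, I would start from the data provided by Definition \ref{defi:refprior} applied to $\pi^\ast$: an openly increasing sequence $(\Theta_i)$ with $\pi^\ast(\Theta_i)>0$ and $\bigcup_i \Theta_i = \Theta$ on which the limit inequality holds for all $\pi\in\cP_i$ with the natural rate $\varphi(k)=k^{d\alpha/2}$. Reading this limit through Theorem \ref{thm:l(pi)} gives $l(\pi^\ast_i) - l(\pi_i) \geq 0$ for every $\pi\in\cP_i$, which is precisely the condition $\pi^\ast_i \in \argmax_{\pi\in\cP_i} l(\pi_i)$.

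The main subtlety, and really the only non-routine point, will be the choice of rate $\varphi(k) = k^{d\alpha/2}$. Theorem \ref{thm:l(pi)} pinpoints this as the unique scale at which $I_{D_\alpha}(\pi_i|k)$ admits a non-trivial finite limit on any compact $\Theta_i$, so the argument implicitly relies on this being the rate intended in Definition \ref{defi:refprior}. Modulo this convention, the proof requires no further calculation beyond a termwise application of Theorem \ref{thm:l(pi)}.
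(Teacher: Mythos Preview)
Your proposal is correct and matches the paper's own treatment: the paper does not give a separate proof of this proposition, instead presenting it as an immediate consequence of Theorem~\ref{thm:l(pi)} (``This way, it gives a necessary and sufficient condition for a prior to be a $D_\alpha$-reference prior, as stated in the proposition below''). Your explicit unpacking of both implications via the limit identity $\lim_k k^{d\alpha/2}\bigl(I_{D_\alpha}(\pi^\ast_i|k)-I_{D_\alpha}(\pi_i|k)\bigr)=l(\pi^\ast_i)-l(\pi_i)$ is exactly the intended argument, and your remark on the rate $\varphi(k)=k^{d\alpha/2}$ correctly identifies the only point where a convention is being invoked.
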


In \citep{VanBiesbroeckBA2023}, the author proved that the Jeffreys prior satisfies the condition of above's proposition when $\cP$ is the set of priors admitting locally bounded and a.e. continuous densities (denoted by $\cM^\varrho_\cC/\!\simeq$ in section \ref{sec:framework}).

The choice of the class $\cP$ remains open and can be restrained from the large one of priors in $\cM^\varrho_\cC/\!\simeq$. %
However, such a restriction leaves really unsure the existence of a reference prior. 
Indeed, Definition \ref{defi:refprior} is itself restrictive, as to admit a reference prior, the class must contain a prior whose restrictions are optimal on any compact subsets of $\Theta$.
In this paper, we suggest an extension of the definition of reference priors in the case where in the class $\cP$, the optimal priors on compact subsets of $\Theta$ are not re-normalizations of each other, but converge to a  prior in $\cP$. Such convergence is considered in the sense of the Q-vague convergence \citep{Bioche2016} on $\cM^\varrho_\cC/\!\simeq$, as explained in the following definition.

\begin{defi}[Quasi reference prior]\label{defi:quasiRefprior}
    Let $\cP$ be a class of priors. We call $\pi^\ast\in\cP$ a quasi $D_\alpha$-reference prior if it exists an openly increasing sequence $(\Theta_i)_{i\in \NN}$  of compact sets with $\bigcup_{i\in\NN}\Theta_i=\Theta$ such that
    \begin{itemize}
        \item[(i)] for any $i\in \NN$, there exists a $D_\alpha$-reference prior $\pi_i^\ast$ over $\cP_i$, the set of re-normalized restrictions to $\Theta_i$ of priors in $\cP$,
        \item[(ii)] $\pi^\ast$ is the Q-vague limit of the sequence $(\pi^\ast_i)_{i\in \NN}$.
    \end{itemize}
\end{defi}

Proposition below ensures that this definition properly extends Definition \ref{defi:refprior}.
\begin{prop}\label{prop:quasi}
     \begin{itemize}
        \item If $\pi^\ast$ is a $D_\alpha$-reference prior over a class $\cP$, then it is a quasi $D_\alpha$-reference prior.
        \item If $\cP$ is a class of priors convex and stable by multiplication by indicator functions over measurable sets, then the quasi $D_\alpha$-reference prior is the unique $D_\alpha$-reference prior.
        \item If $\cP$ is a convex class of priors and if the sequence of subset $(\Theta_i)_i$ in Definition \ref{defi:quasiRefprior} is fixed, then the quasi-reference prior over class $\cP$ is unique.
    \end{itemize}
\end{prop}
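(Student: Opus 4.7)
The plan is to rely throughout on two structural facts. First, the functional $l(\pi)=C_\alpha\int\pi(\theta)^{1+\alpha}|\cI(\theta)|^{-\alpha/2}d\theta$ is strictly concave on any convex set of probability densities supported on a compact subset of $\Theta$: since $\alpha\in(0,1)$ the map $x\mapsto x^{1+\alpha}$ is strictly convex, and since $C_\alpha<0$ the sign is reversed. Second, by Theorem~\ref{thm:l(pi)} and the proposition immediately following it, a prior is a $D_\alpha$-reference prior over $\cP_i$ on the compact $\Theta_i$ if and only if it maximises $l$ over $\cP_i$; combined with strict concavity this gives uniqueness of the reference prior on each compact inside any convex class.

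For the first item, I would take the sequence $(\Theta_i)_{i\in\NN}$ provided by Definition~\ref{defi:refprior}. The re-normalized restrictions $\pi^\ast_i$ are then, by that very definition, $D_\alpha$-reference priors over $\cP_i$. Picking a non-negative representative $f^\ast$ of $\pi^\ast$, the measures $f^\ast\indic_{\Theta_i}$ increase pointwise to $f^\ast$ and therefore converge vaguely, which yields Q-vague convergence of the corresponding classes to $\pi^\ast$ in $\cM^\varrho_\cC/\!\simeq$. Hence $\pi^\ast$ is a quasi $D_\alpha$-reference prior.

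For the third item, fix $(\Theta_i)_{i\in\NN}$. Convexity of $\cP_i$ together with strict concavity of $l$ forces the maximiser $\pi^\ast_i$ to be unique whenever it exists, so the sequence of reference priors on the compacts is uniquely determined. Since the Q-vague topology on $\cM^\varrho_\cC/\!\simeq$ introduced by \citet{Bioche2016} is Hausdorff, the Q-vague limit $\pi^\ast$ of this unique sequence is itself unique.

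For the second item, I would combine a consistency argument with the preceding uniqueness. The key claim is that under convexity and stability by multiplication by indicator functions, the re-normalized restriction of $\pi^\ast_i$ to any compact $\Theta_j\subset\Theta_i$ maximises $l$ on $\cP_j$. The idea is to decompose $\pi^\ast_i=\lambda_i\tilde\pi_i+(1-\lambda_i)\hat\pi_i$ as the convex combination of its re-normalized restrictions to $\Theta_j$ and $\Theta_i\setminus\Theta_j$, with weight $\lambda_i=\pi^\ast_i(\Theta_j)$. For any candidate $\rho\in\cP_j$, stability by indicators together with convexity place $\lambda_i\rho+(1-\lambda_i)\hat\pi_i$ in $\cP_i$; optimality of $\pi^\ast_i$, additivity of the integral defining $l$ over the two disjoint pieces, and the sign of $C_\alpha$ then force $\tilde\pi_i=\pi^\ast_j$. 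This consistency lets me glue the densities of the $\pi^\ast_i$ into a single density $f^\ast$ on $\Theta$ whose class $\pi^\ast$ is the Q-vague limit and whose re-normalized restrictions recover the $\pi^\ast_i$; hence $\pi^\ast$ is a $D_\alpha$-reference prior. For uniqueness, the same consistency argument applied to any compact $K\subset\Theta$ shows that the re-normalized restriction to $K$ of any $D_\alpha$-reference prior equals the unique $l$-maximiser on $\cP_K$; two reference priors therefore coincide on every compact and are equal in $\cM^\varrho_\cC/\!\simeq$. I expect the consistency claim to be the main obstacle: it is where stability by indicators is essential, since without it one cannot build the perturbations needed to reduce optimality on $\Theta_i$ to optimality on $\Theta_j$.
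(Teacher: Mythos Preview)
Your proposal is correct and follows essentially the same route as the paper: strict concavity of $l$ (from $C_\alpha<0$ and convexity of $x\mapsto x^{1+\alpha}$) for the uniqueness in item three, and the disjoint-support additivity $l(t\rho+(1-t)\hat\pi)=t^{1+\alpha}l(\rho)+(1-t)^{1+\alpha}l(\hat\pi)$ to reduce optimality on $\Theta_i$ to optimality on a subcompact in item two. Your treatment is in fact more explicit than the paper's in a couple of places---you spell out the Q-vague convergence in item one via pointwise monotone convergence of $f^\ast\indic_{\Theta_i}$, you make the gluing step in item two explicit, and you invoke the Hausdorff property of the Q-vague topology for item three where the paper simply asserts uniqueness of the limit---but the underlying arguments are the same.
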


\begin{proof}
    The first statement of the proposition is clear given the definition of a $D_\alpha$-reference prior.\\
    For the second, just notice that if $\Theta$ is compact and if $\pi^\ast$ is the maximal argument of $l$ over $\cP$, then its re-normalized restriction $\pi_1^\ast$ on a compact subset $U$ maximizes $l$ over the class of all re-normalized restrictions $\cP_U$.
    Indeed, if we suppose that $\pi_1\in\cP_U$ maximizes $l$ then, denoting $\pi_0^\ast$ the re-normalized restriction of $\pi^\ast$ to $\Theta\setminus U$, $t=\int_U\pi^\ast$, and $\pi = t\pi_1+(1-t)\pi_0^\ast$, $\pi\in\cP$ and
        \begin{equation}
            l(\pi) = t^{\alpha+1}l(\pi_1) + (1-t)^{\alpha+1}l(\pi_0^\ast) > t^{\alpha+1}l(\pi_1^\ast) + (1-t)^{\alpha+1}l(\pi_0^\ast) = l(\pi^\ast).
        \end{equation}
    Hence $\pi^\ast$ does not maximize $l$ over $\cP$, which is absurd.\\
    Therefore, in our problem, considering two sequences $(\pi^{(1)}_i)_i$ and $(\pi_i^{(2)})_i$ respectively defined on $(\Theta_i^{(1)})_i$ and $(\Theta^{(2)}_i)_i$, we will get that for any $i$, $\pi^{(1)}_i(\theta)=\pi_i^{(2)}(\theta)$ for all $\theta\in\Theta_i^{(1)}\cap\Theta_i^{(2)}$.\\
    Eventually, they are identical on every compact subsets of $\Theta$, equal to their Q-vague limits which are the same.\\
    Finally, the third statement of the proposition results from the strict concavity of $l$. Indeed for any $i$, the class $\cP_i$ of re-normalized priors on $\Theta_i$ is convex so that the maximal argument of $l$ over $\cP_i$ is unique. Hence the uniqueness of the quasi-reference prior over $\cP$.
\end{proof}

\section{Constrained reference priors}\label{sec:constraintsection}

\subsection{Motivations}\label{sec:issuesJeff}

As we mentioned in Section \ref{sec:refpriors}, we already know that the definitions of the reference prior are satisfied by the Jeffreys prior over the large class of priors admitting locally bounded and a.e. continuous densities w.r.t. the Lebesgue measure \citep{VanBiesbroeckBA2023}. 

This result is, however, limiting and disappointing in some cases. The reasons are the following ones: (i) the Jeffreys prior is not recommended in high-dimensional problems as it is known to be `either too diffuse or too concentrated' \citep{Berger2015}; moreover (ii) when the expression of the likelihood is itself complex, the computation of the Jeffreys prior can become  intractable, which is why (iii) in practice a restriction to the class of priors to ones which are easier to compute is often favored; also (iv) the Jeffreys prior is known to often lead to an improper prior, which does not necessarily issue a proper posterior distribution, essential for practical \emph{a posteriori} inference and sampling.

The first item mentioned in the above paragraph is frequently tackled with a sequential construction of the reference prior as suggested by \citet{Bernardo1979}.
On the condition that an ordering of the parameters is set:
 \begin{equation}
     \theta = (\theta_1,\dots,\theta_r) \in \Theta=\Theta_1\times\dots\times\Theta_r,
 \end{equation}
this construction considers a hierarchical construction of the reference prior. Typically, it is recommended to assume $\Theta_j\subset\RR^{d_j}$ with small dimensions $d_j$ (e.g., lower or equal than $2$) for any $j\in\{1,\dots,r\}$, and to sequentially build a reference prior on the $\Theta_j$,  $j\in\{1,\dots,r\}$:
 \begin{enumerate}
     \item initially fix $\ell_k^1=\ell_k$;
     \item for any values of $\theta_{j+1},\dots,\theta_r\in\Theta_{j+1}\times\dots\times\Theta_r$, compute a reference prior $\pi_j(\cdot|\theta_{j+1},\dots,\theta_r)$ under the model with likelihood $\theta_j\mapsto\ell_k^j(\mbf y|\theta_j,\dots,\theta_r)$;
     \item derive $\ell_k^{j+1}$ such as 
         \begin{equation}\label{eq:hier:condlikeint}
            \ell_k^{j+1}(\mbf y|\theta_{j+1},\dots,\theta_r) =  \int_{\Theta_j}\ell_k^j(\mbf y|\theta_j,\dots,\theta_r)d\pi_j(\theta_j|\theta_{j+1},\dots,\theta_r).
         \end{equation}
 \end{enumerate}
Step 2 of the method depicted above consists of the derivation of a reference prior w.r.t. the variable $\theta_j$ in the sense of Definition \ref{defi:refprior} (or of a quasi-reference prior if that latter does not exist). Thus, our (quasi)-reference priors over constrained classes of priors can be  plainly incorporated into this method.
Moreover, we invite to note that this construction does not solve the limitations (ii) to (iv) previously evoked. Actually, it makes them essential. Indeed, step 2 requires, firstly, a derivation of a reference prior, so that it would lead to a low-dimensional Jeffreys prior if the class of priors is not constrained. Also step 3 necessitates, secondly, that the latter leads to a proper posterior so that the integral involved does not diverge.

This last issue is taken into account by \citet{Berger1992} with the suggestion of such construction on an increasing sequence of compact subsets of $\Theta$: $\bigcup_{i\in\NN}\Theta_i=\Theta$. The hierarchical reference prior can then be chosen as a limit of the ones obtained under $\Theta_i$ when $i\to\infty$. However, this limit can be cumbersome to derive in practice. Another solution suggested by \citet{Mure2018} is to restrict the $\sigma$-algebra $\sY$ until the reference prior derived in step 2 leads to a proper posterior. It is still imperfect, as there is no guarantee that such a restricted $\sigma$-algebra exists outside the trivial one.

In the following subsections, we propose a range of solutions to some of the issues aforementioned, based on the derivation of reference priors over constrained classes of priors. In  Section \ref{sec:constrainedasympt}, we derive a quasi $D_\alpha$-reference prior over classes of priors that are easy to compute, in order to tackle the limitations (ii) and (iii) previously evoked. Then, Section \ref{sec:constrainedproper} explores another kind of constrained classes of priors, which leads to $D_\alpha$-reference priors that can solve the item (iv).

\subsection{Constrained reference priors based on Jeffreys' asymptotics}\label{sec:constrainedasympt}

In this section, we tackle the computational cost of the reference prior. As mentioned in Section \ref{sec:issuesJeff}, the Jeffreys prior expression is often complex to derive even in low dimensional models. This is even more a problem in practical studies where the prior must be evaluated a numerous number of times, when it resorts to MCMC simulations to provide posterior samples of $\theta$ for instance.

In some works of the literature, the Jeffreys prior is replaced by its decay rates at the boundary of the domain. For instance, the reference prior for Gaussian processes suggested by \citet{Gu2019} is built on the basis of the decay rates of a Jeffreys prior sequentially computed on the different variables that compose $\theta$ (following the construction presented in Section \ref{sec:issuesJeff}).
Their idea is that, in particular when it is improper, the prior provides the most information from its asymptotic rates, and variations of them are noticed to have a strong influence on the posterior distribution.
The result that follows provides a formalization of this intuition, focusing on the case where the Jeffreys prior asymptotically behaves like exponentiations of coordinates of $\theta$.

\begin{thm}\label{thm:Jthetaa}
    Suppose $\Theta\subset\RR$ is an interval of the form $[c,b)$ (or $(b,c]$). %
    \begin{itemize}
        \item If $b\in\RR$ and $J(\theta)\equi{\theta\rightarrow b}C|\theta-b|^a$ for constants $C\in\RR$ and $a\leq-1$, then $\pi^\ast(\theta)\propto|\theta-b|^a$ is the quasi $D_\alpha$-reference prior over $\hat\cP = \{\pi(\theta)\propto|\theta-b|^u,\,u\in\RR\}$.
        \item If $|b|=\infty$ and $J(\theta)\equi{\theta\rightarrow b}C\theta^a$ for constants $C\in\RR$ and $a\geq-1$, then $\pi^\ast(\theta)\propto\theta^a$ is the quasi $D_\alpha$-reference prior over $\hat\cP = \{\pi(\theta)\propto\theta^u,\,u\in\RR\}$.
    \end{itemize}
\end{thm}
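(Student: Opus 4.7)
The plan is to verify the two items of Definition \ref{defi:quasiRefprior}; I detail the first case, the second being analogous. Pick $\Theta_i=[c,b-\epsilon_i]$ with $\epsilon_i\downarrow 0$, an openly increasing sequence of compacts exhausting $\Theta$. Every $\pi_u\in\hat\cP$, $\pi_u(\theta)\propto|\theta-b|^u$, has a well-defined re-normalized restriction $q_{u,i}(\theta)=|\theta-b|^u/Z_{u,i}$ on $\Theta_i$, with $Z_{u,i}=\int_{\Theta_i}|\theta-b|^u d\theta<\infty$ since $b\notin\Theta_i$; so $\hat\cP_i=\{q_{u,i}:u\in\RR\}$. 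Theorem \ref{thm:l(pi)} then applies and gives
\begin{equation*}
l(q_{u,i})=\frac{C_\alpha}{Z_{u,i}^{1+\alpha}}\int_{\Theta_i}|\theta-b|^{u(1+\alpha)}J(\theta)^{-\alpha}d\theta,
\end{equation*}
so identifying a $D_\alpha$-reference prior on $\hat\cP_i$ reduces to the explicit one-parameter optimisation of $u\mapsto l(q_{u,i})$ over $\RR$.

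The key step is to establish the existence of an optimiser $u_i^\ast$ and to show $u_i^\ast\to a$. After the substitution $s=|\theta-b|$ and use of $J(b\mp s)=(C+o(1))s^a$ near $0$, $l(q_{u,i})$ becomes an explicit function of $u$ built from the integrals $\int_{\epsilon_i}^A s^{u(1+\alpha)-a\alpha}(1+o(1))ds$ and $\int_{\epsilon_i}^A s^u ds$, where $A=|c-b|$. Direct expansion in $\epsilon_i$ partitions $u\in\RR$ by the critical exponents $-1$ and $(a\alpha-1)/(1+\alpha)$ (both $\le-1$ under $a\le-1$), according to whether the denominator and the numerator converge or diverge. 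Analysing the behaviour at $u\to\pm\infty$ shows the optimum is attained at some interior $u_i^\ast\in\RR$. The main obstacle is pinning down its limit: the leading-order dependence on $\epsilon_i$ produces the common rate $\epsilon_i^{-\alpha(a+1)}$ across two regimes of $u$, so the exponent alone does not select the optimum. Computing the prefactor in the dominant regime yields an explicit function of $u$ optimised precisely at $u=a$, while the competing regime contributes strictly worse asymptotic values; combining these comparisons forces $u_i^\ast\to a$.

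The second case is handled identically with $\Theta_i=[c,i]$ and the asymptotic $J(\theta)\sim C\theta^a$ at infinity, yielding $u_i^\ast\to a$. Item (ii) of Definition \ref{defi:quasiRefprior} then follows routinely: for any continuous $f$ with compact support in $\Theta$, $\mathrm{supp}(f)$ is bounded away from $b$ (resp.\ bounded), so $|\theta-b|^{u_i^\ast}\to|\theta-b|^a$ (resp.\ $\theta^{u_i^\ast}\to\theta^a$) uniformly on $\mathrm{supp}(f)$; dominated convergence gives the vague convergence of $Z_{u_i^\ast,i}q_{u_i^\ast,i}$ to the announced limit, which is precisely the Q-vague convergence of $\pi_i^\ast$ to $\pi^\ast$.
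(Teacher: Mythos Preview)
Your plan matches the paper's approach closely: both reduce to a one-parameter optimisation $u\mapsto l(q_{u,i})$ on an exhausting compact sequence, identify the same critical exponents $-1$ and $u_m=(a\alpha-1)/(1+\alpha)$, extract the common leading scale $\epsilon_i^{-\alpha(a+1)}$ on the regime $u<u_m$, and show the prefactor $|u+1|^{1+\alpha}/|\alpha(u-a)+u+1|$ is minimised at $u=a$; the paper then treats the other $u$-regimes separately and closes with the same Q-vague argument you describe.

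Two places where your sketch hides real work that the paper does carefully. First, writing the numerator as $\int_{\epsilon_i}^A s^{u(1+\alpha)-a\alpha}(1+o(1))\,ds$ is not immediately usable: the $o(1)$ is in $s$, not in $u$, and on $[\delta,A]$ it is not small at all. To pass from pointwise asymptotics of the integrand to a conclusion about the \emph{location} of the maximiser $u_i^\ast$ you need a uniform-in-$u$ comparison; the paper obtains this by splitting the integral at a fixed $\theta_{i_0}$ and showing the ratio $\psi_i/\tilde\psi_i\to 1$ uniformly on $(-\infty,u_m)$, together with a separate coercivity bound to confine $(u_i^\ast)$ to a bounded set before passing to subsequential limits. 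Your ``computing the prefactor'' step implicitly assumes this uniformity. Second, the boundary case $a=-1$ collapses both critical exponents to $-1$, so the regime decomposition you describe degenerates and the rate becomes logarithmic ($|\log\epsilon_i|^{-\alpha}$ rather than a power of $\epsilon_i$); the paper handles this case by a separate, short argument that your plan does not flag. Neither point invalidates the strategy, but a complete proof must address both.
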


\begin{proof}
    The proof is technical and detailed in section \ref{sec:proofs}.
    The idea is that $l(\pi)$ can be seen as a negative divergence between $\pi$ and $J$. However, when $J$ is improper at the boundary of the domain, the maximization of $l(\pi)$ gets closer to the minimization of a divergence between $\pi$ and the improper decay rate of $J$.
\end{proof}

\begin{rem}
    Theorem \ref{thm:Jthetaa} still stands when $\Theta=(b,c)$ (or $(c,b)$) if $c\ne\infty$ and if $J(\theta)$ admits a non-null and finite limit when $\theta\to b$.
\end{rem}

This theorem serves the statement of two conclusions: (i) it emphasizes that when Jeffreys prior is improper, its improper decay rates contain the most relevant information, and (ii) it proposes to choose this asymptotic expansion of Jeffreys as a quasi $D_\alpha$-reference prior when we look for an easy prior to compute.

\subsection{Constraints for proper reference posteriors}\label{sec:constrainedproper}

In Section \ref{sec:constrainedasympt}, we have provided some elements to construct a tractable reference prior on the coordinates over which Jeffreys prior is improper.
The reference prior that our theorem proposes keeps the improper characteristic of Jeffreys prior on the same coordinates.
This improper aspect can, however, remain an issue in some cases, especially when the resulting posterior is improper as well.

For this reason, it might happen that some asymptotic rates in some directions still have to be tackled. The work in this section is concluded by results that allow defining a $D_\alpha$-reference prior (or quasi $D_\alpha$-reference prior), which benefits from adjusted decay rates from Jeffreys prior. 
The proposition below constitutes a preliminary result that gives the form of a $D_\alpha$-reference prior over a class of priors with linear constraints.

\begin{assu}\label{assu:glibre}
    A family of functions from $\Theta$ to $\RR$ $(g_j)_{j=1}^p$ is said to satisfy Assumption \ref{assu:glibre} if $g_0,\dots,g_p$ are linearly independent in the space of a.e. continuous functions from $\Theta$ to $\RR$, where $g_0:=\theta\mapsto 1$.
\end{assu}

\begin{prop}\label{prop:constraints}
    Suppose $\Theta$ to be a compact subset of $\RR^d$. Let $g_1,\dots,g_p$ be %
    functions in $\cR^\cC$ %
    that satisfy Assumption \ref{assu:glibre}. Define $\tilde\cP$ the class of priors on $\Theta$ such that $\forall 1,\dots,p$, $\int_\Theta \pi g_j=c_j$, for some $c_j\in\RR$.
    If %
    there exists a $D_\alpha$-reference prior over $\tilde\cP$, it is unique. If it is a.e. positive, its density $\pi^\ast$ verifies
    \begin{equation}
        \pi^\ast(\theta) = J(\theta)\left(\lambda_0+\sum_{j=1}^p\lambda_jg_j(\theta) \right)^{1/\alpha},
    \end{equation}
    for some $\lambda_j\in\RR$. Reciprocally, if there exists a prior $\pi^\ast\in\tilde\cP$ which verifies the above equation for some $\lambda_j\in\RR$, it is the $D_\alpha$-reference prior over $\tilde\cP$.
\end{prop}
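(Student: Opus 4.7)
Since $\Theta$ is compact, Proposition \ref{prop:} applied with the constant sequence $\Theta_i = \Theta$ identifies any $D_\alpha$-reference prior over $\tilde\cP$ as a maximiser on $\tilde\cP$ of the functional
\[
l(\pi) = C_\alpha \int_\Theta \pi(\theta)^{1+\alpha}\, J(\theta)^{-\alpha}\, d\theta.
\]
Because $C_\alpha < 0$ for $\alpha \in (0,1)$ and $u \mapsto u^{1+\alpha}$ is strictly convex, $l$ is strictly concave on the convex class $\tilde\cP$, which yields the uniqueness statement directly.

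For the necessary condition, I use multiplicative perturbations $\pi_t = \pi^\ast(1 + t h)$ with $h \in \cR^\cC$ bounded. Such a $\pi_t$ lies in $\cR^\cC$, stays non-negative for $|t| \le 1/\|h\|_\infty$, and belongs to $\tilde\cP$ iff $\int_\Theta \pi^\ast h\, g_j\, d\theta = 0$ for $j = 0, \ldots, p$ (with $g_0 \equiv 1$ encoding mass conservation). Differentiating under the integral at $t = 0$, justified by dominated convergence using boundedness of $h$ and finiteness of $l(\pi^\ast)$, yields
\[
(1+\alpha)\, C_\alpha \int_\Theta (\pi^\ast)^{1+\alpha}\, J^{-\alpha}\, h\, d\theta = 0
\]
for every admissible $h$. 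This orthogonality, combined with Assumption \ref{assu:glibre} and the a.e.\ positivity of $\pi^\ast$, forces $(\pi^\ast)^{1+\alpha}\, J^{-\alpha} = \pi^\ast\bigl(\lambda_0 + \sum_{j=1}^p \lambda_j g_j\bigr)$ a.e.\ for some real numbers $\lambda_0, \ldots, \lambda_p$. Dividing by $\pi^\ast$ and extracting the $\alpha$-th root, which is unambiguous since the bracketed expression must be non-negative where $\pi^\ast > 0$, delivers the announced formula.

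Conversely, assume $\pi^\ast \in \tilde\cP$ has the stated form with some $\lambda_j$. The concavity of $l$ gives the first-order inequality
\[
l(\pi) \le l(\pi^\ast) + (1+\alpha)\, C_\alpha \int_\Theta (\pi^\ast)^\alpha\, J^{-\alpha}\, (\pi - \pi^\ast)\, d\theta
\]
for every $\pi \in \tilde\cP$. Replacing $(\pi^\ast)^\alpha J^{-\alpha}$ by $\lambda_0 + \sum_j \lambda_j g_j$ and using that $\pi$ and $\pi^\ast$ share the same values of $\int \pi = 1$ and $\int \pi g_j = c_j$ makes the right-hand integral vanish, so $l(\pi) \le l(\pi^\ast)$, with equality by strict concavity only at $\pi = \pi^\ast$. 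Hence $\pi^\ast$ is the unique maximiser, and therefore the $D_\alpha$-reference prior.

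The delicate step is the variational one: passing rigorously from the integral orthogonality to the pointwise Euler--Lagrange identity, and ensuring that enough admissible perturbations $h$ exist to probe the entire $L^2(\pi^\ast\, d\theta)$-orthogonal complement of $\mathrm{span}(g_0, \ldots, g_p)$. The a.e.\ positivity of $\pi^\ast$ is what makes this complement rich enough and legitimises division by $\pi^\ast$, while Assumption \ref{assu:glibre} secures both the existence of well-defined coefficients $\lambda_j$ and their uniqueness.
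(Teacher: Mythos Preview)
Your argument is correct and reaches the same conclusion as the paper, but by a more hands-on route. The paper sets up the Banach space $E$ of bounded a.e.\ continuous functions with the supremum norm, shows that $l = \phi_2 \circ \phi_1$ is continuously Fr\'echet-differentiable on the open cone of positive functions, and then invokes an abstract Lagrange multiplier theorem (citing Zeidler) to obtain the Euler--Lagrange identity; the converse is dispatched by an appeal to ``second-order conditions for the Lagrangian''. Your approach avoids that machinery: multiplicative perturbations $\pi_t = \pi^\ast(1+th)$ give the first-order condition directly, and your tangent-line inequality for the concave functional $l$ is a cleaner, fully elementary proof of the converse than the paper's second-order argument.

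Two small points deserve tightening. First, invoking Proposition~\ref{prop:} ``with the constant sequence $\Theta_i = \Theta$'' is the wrong direction: that proposition guarantees only that \emph{some} openly increasing compact sequence witnesses the reference-prior property. To conclude that any reference prior maximises $l$ on all of $\Theta$, you need the observation the paper makes explicitly: since $(\Theta_i)_i$ is openly increasing and covers the compact $\Theta$, the associated open sets $V_i$ already cover $\Theta$ by finitely many terms, so $\Theta_i = \Theta$ for all large $i$. Second, the ``delicate step'' you flag is really the elementary fact that a linear functional on a vector space which vanishes on the common kernel of finitely many linear functionals $\phi_0,\dots,\phi_p$ lies in their span; here $\phi_j(h) = \int_\Theta \pi^\ast g_j h\, d\theta$ are linearly independent on bounded $\cR^\cC$ precisely because $\pi^\ast$ is a.e.\ positive and the $g_j$ satisfy Assumption~\ref{assu:glibre}. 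Stating this one-line linear-algebra fact would close the gap you identify, and no $L^2$ structure is actually needed.
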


\begin{proof}
    This proposition results from a Lagrange multipliers theorem. A detailed proof is proposed in Section \ref{sec:proofs}.
\end{proof}

\begin{rem}
    While it is not the subject of this work, we let the reader notice that this proposition opens the way to the introduction of constraints based on expert judgments in prior elicitation. They can take the form of moment constraints or predictive constraints \citep{Bousquet2023}.    
\end{rem}

\begin{rem}\label{rem:klconst}
    The expression of the reference prior given by Proposition \ref{prop:constraints} depends on the chosen $\alpha$-divergence.
    While this work considers only the framework of reference priors under $\alpha$-divergences as a dissimilarity measure, a version of this theorem could be written in the original framework of the reference prior theory that uses the Kullback-Leibler divergence. The expression of the resulting reference prior would be impacted. In \citep{BernardoSmith1994}, \citeauthor{BernardoSmith1994} suggest that the expression should take the form of
        \begin{equation}
            \pi^\ast\propto J\cdot\exp\left(\sum_{j=1}^p\lambda_j g_j\right),
        \end{equation}
        for some $\lambda_j$ that remain to be determined.
\end{rem}

Below, given a function $g$ that is selected to adjust the asymptotics of Jeffreys prior, is stated the expression of a proper $D_\alpha$-reference prior.

\begin{thm}\label{thm:lintoproper}
    Let $g:\Theta\to(0,\infty)$ be a function in $\cR^\cC$ such that
        \begin{equation}\label{eq:intgalphafinite}
            \int_\Theta J(\theta)g^{1/\alpha}(\theta)d\theta<\infty \quad\text{and}
            \quad\int_\Theta J(\theta)g^{1/\alpha+1}(\theta)d\theta<\infty,
        \end{equation}
    and suppose that $g$ is bounded in the neighborhood of $b$ for an element $b\in\partial\Theta$. %
    We denote by $\overline\cP$ the class of positive priors $\pi$ on $\Theta$ such that $\int_\Theta\pi g<\infty$, and we define 
    $\pi^\ast\in\overline\cP$ as follows 
        \begin{equation}
            \pi^\ast(\theta)\propto J(\theta)g(\theta)^{1/\alpha}.
        \end{equation}
    If $Jg$ is non-integrable in the neighborhood of $b$, then $\pi^\ast$ is a $D_\alpha$-reference prior over $\overline\cP$. Otherwise, and if $J$ is improper in the neighborhood of $b$, $\pi^\ast$ is a $D_\alpha$-reference prior over the class of proper priors in $\overline\cP$.
\end{thm}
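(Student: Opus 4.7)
The strategy is to identify $\pi^\ast$ as the solution of a moment-constrained optimization on each of a sequence of compact subsets, and to use the hypotheses near $b$ to exclude competitors with a different Lagrangian structure. As a preliminary, I would verify that $\pi^\ast$ belongs to $\overline\cP$ and is proper: the two integrability conditions in \eqref{eq:intgalphafinite} give $\int_\Theta \pi^\ast \propto \int_\Theta J g^{1/\alpha} < \infty$ (properness) and $\int_\Theta \pi^\ast g \propto \int_\Theta J g^{1+1/\alpha} < \infty$ (membership in $\overline\cP$). I would then pick an openly increasing sequence of compact subsets $(\Theta_i)_{i\in\NN}$ covering $\Theta$, each containing a fixed neighborhood of $b$ on which the assumptions apply, and with $\pi^\ast(\Theta_i) > 0$.

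Next, on each $\Theta_i$ I would apply Proposition~\ref{prop:constraints} with $p = 1$, $g_1 = g$, and constraint value $c_1^{(i)} := \int_{\Theta_i} \pi^\ast g / \pi^\ast(\Theta_i)$. The reference prior over the sub-class of probability densities on $\Theta_i$ satisfying this moment constraint takes the Lagrangian form $J(\lambda_0 + \lambda_1 g)^{1/\alpha}$; by construction, the re-normalized restriction $\pi^\ast_i = \pi^\ast|_{\Theta_i}/\pi^\ast(\Theta_i)$ realizes this form with $\lambda_0 = 0$, $\lambda_1 = 1$, and is therefore the reference prior within this constrained sub-class. Note that in both cases of the theorem the Jeffreys prior $J$ itself is ruled out from the relevant class (either by non-integrability of $Jg$ near $b$, or by improperness of $J$ near $b$), so that the usual unconstrained maximizer of $l$ is unavailable.

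The principal difficulty, and the step I expect to require the most care, is to upgrade this optimality from the moment-constrained sub-class to the full class of re-normalized restrictions of priors in $\overline\cP$ (respectively, of its proper sub-class). For any competitor $\pi \in \overline\cP$ whose restriction realizes a different moment value $c \ne c_1^{(i)}$, Proposition~\ref{prop:constraints} dominates it by a prior of Lagrangian form $J(\lambda_0 + \lambda_1 g)^{1/\alpha}$ with $\lambda_0 \ne 0$; such a prior behaves asymptotically like $\lambda_0^{1/\alpha} J$ near $b$. In case~1, the non-integrability of $Jg$ near $b$ then contradicts the global condition $\int_\Theta \pi g < \infty$, while in case~2, the improperness of $J$ near $b$ contradicts the global properness $\int_\Theta \pi < \infty$. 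In either regime the only admissible value is $\lambda_0 = 0$, so $\pi^\ast$ emerges as the unique (quasi-)reference prior on the relevant class. Uniqueness is confirmed by Proposition~\ref{prop:quasi} via the convexity of $\overline\cP$ and of its proper sub-class.
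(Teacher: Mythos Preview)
Your plan captures the setup and correctly identifies Proposition~\ref{prop:constraints} as the key tool, but the ``upgrade'' step in your final paragraph does not go through. You argue that a competitor $\pi\in\overline\cP$ with a different restricted moment is dominated on $\Theta_i$ by a Lagrange-form prior $J(\lambda_0+\lambda_1 g)^{1/\alpha}$ with $\lambda_0\ne0$, and then invoke the global integrability or properness conditions near $b$ to obtain a contradiction. But that contradiction is being applied to the Lagrange-form dominator, which lives only on the compact $\Theta_i$ and is not a member of $\overline\cP$; the global conditions say nothing about it. The competitor $\pi$ itself does satisfy the global conditions, but you have no link between $l(\pi_i)$ and $l(\pi^\ast_i)$: all you know is $l(\pi_i)\le l(\text{dominator})$, and on the compact $\Theta_i$ the unconstrained maximizer of $l$ is the re-normalized Jeffreys prior $J_i$, which strictly beats $\pi^\ast_i$ and \emph{is} the restriction of a prior in $\overline\cP$ (extend $J|_{\Theta_i}$ by something integrable outside). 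So on a fixed $\Theta_i$ your $\pi^\ast_i$ is never the overall best, and no boundary argument on $\Theta_i$ can repair this since $b\in\partial\Theta$ lies outside every compact $\Theta_i$.

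The paper evades this by running a necessity argument rather than the direct sufficiency one you attempt. It \emph{assumes} a $D_\alpha$-reference prior over $\overline\cP$ (resp.\ its proper sub-class) exists, calls it $\pi^\ast$, and deduces that on every $\Theta_i$ the restriction $\pi^\ast_i$ solves a constrained problem, hence has Lagrange form $J(\lambda_i^{(1)}+\lambda_i^{(2)}g)^{1/\alpha}$. The decisive step you are missing is the \emph{consistency across $i$}: since $\pi^\ast_{i+1}|_{\Theta_i}\propto\pi^\ast_i$, the multipliers $(\lambda_i^{(1)},\lambda_i^{(2)})$ are independent of $i$, which yields a \emph{global} expression $\pi^\ast\propto J(\lambda_1+\lambda_2 g)^{1/\alpha}$ on all of $\Theta$. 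It is to this global object---a genuine element of the class that must satisfy $\int_\Theta\pi^\ast g<\infty$ (resp.\ $\int_\Theta\pi^\ast<\infty$)---that the boundedness of $g$ and the non-integrability of $Jg$ (resp.\ of $J$) near $b$ are applied, forcing $\lambda_1=0$. Your argument tries to use the boundary information on the local dominators; the paper propagates the Lagrange structure globally first and excludes $\lambda_1\ne0$ once at the end.
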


\begin{proof}
    The statement of this theorem results from the sequential use of Proposition \ref{prop:constraints} on an increasing sequence of compact subsets of $\Theta$. A detailed proof is written in Section \ref{sec:proofs}.     
\end{proof}

   To improve the above theorem, one would like 
     to relax the first assumption in Equation (\ref{eq:intgalphafinite}), i.e., to let $\int_\Theta J g^{1/\alpha}$  be infinite. Indeed, in this way, the result would provide a reference prior $\pi^\ast$ ---non-necessarily proper--- but such that $\pi^\ast g\in L^1$. With a good choice of $g$, $\pi^\ast$ could be built as a prior that provides a proper posterior. It is the purpose of the next theorem. 
    The cost of this relaxation is the provision of a quasi-reference prior instead of a reference prior.

\begin{thm}\label{thm:quasipostpropre}
    Let $g:\Theta\to(0,\infty)$ be a continuous function such that
    \begin{equation}
        \int_\Theta J(\theta)g(\theta)d\theta=\infty\quad\text{and}\quad \int_\Theta J(\theta)g^{1/\alpha+1}(\theta)d\theta<\infty,
    \end{equation}
and suppose that $g(\theta)\conv{\theta\rightarrow b}0$ for an element $b\in\partial\Theta$ such that $J$ is non-integrable in the neighborhood of $b$.\\
Let $(\Theta_i)_{i\in\NN}$ be an openly increasing sequence of compact sets that covers $\Theta$ and $(c_i)_i$ be a bounded cequence in $(0,\infty)$. Define the class of priors  $\overline\cP'=\{\pi,\,\forall i,\,\int_{\Theta_i}\pi g=c_i\int_{\Theta_i}\pi\}$.\\ %
Denote for any $i$ $\overline{\cP}'_i$ the class of renormalized restrictions to $\Theta_i$ of priors in $\overline{\cP}'$. If for any $i$ there exists a  positive maximum of $l$ over $\overline\cP'_i$, then  $\pi^\ast$ is a quasi $D_\alpha$-reference prior over $\overline\cP'$ with
    \begin{equation}
        \pi^\ast(\theta)\propto J(\theta)g(\theta)^{1/\alpha}.
    \end{equation}
    This prior is such that $\int_\Theta\pi^\ast g<\infty$.
\end{thm}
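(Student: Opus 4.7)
The plan is to apply Definition \ref{defi:quasiRefprior}: on each compact $\Theta_i$, identify the $D_\alpha$-reference prior over $\overline\cP'_i$ via Proposition \ref{prop:constraints}, and then establish Q-vague convergence of this sequence to $[\pi^\ast]$ with $\pi^\ast \propto Jg^{1/\alpha}$. Over $\Theta_i$ the class $\overline\cP'_i$ consists of normalized densities subject to the single linear moment constraint $\int_{\Theta_i}\pi g\,d\theta = c_i$, so the assumption that $l$ attains a positive maximum on $\overline\cP'_i$ places us in the setting of Proposition \ref{prop:constraints} with $p=1$ and $g_1 = g$. That proposition (combined with its uniqueness clause) delivers
$$\pi^\ast_i(\theta)\propto J(\theta)\bigl(\lambda_{0,i}+\lambda_{1,i}g(\theta)\bigr)^{1/\alpha},\qquad \theta\in\Theta_i,$$
with $\lambda_{0,i}+\lambda_{1,i}g > 0$ on $\Theta_i$.

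Next, I would exploit the freedom of representative inside $[\pi^\ast_i]$ by rescaling to $J(\mu_i+g)^{1/\alpha}$, where $\mu_i := \lambda_{0,i}/\lambda_{1,i}$; positivity of $\lambda_{1,i}$ is a short preliminary step, since $\lambda_{1,i}\leq 0$ would concentrate $\pi^\ast_i$ near $b$ and force its $g$-moment to vanish, contradicting $c_i>0$. Normalization and the moment constraint then collapse into the single scalar equation
$$\frac{\int_{\Theta_i}J(\theta)\bigl(\mu_i+g(\theta)\bigr)^{1/\alpha}g(\theta)\,d\theta}{\int_{\Theta_i}J(\theta)\bigl(\mu_i+g(\theta)\bigr)^{1/\alpha}\,d\theta}=c_i.$$

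The crux is to prove $\mu_i\to 0$. My approach is by contradiction: if $\mu_{i_k}\geq\mu_0>0$ along a subsequence, then for any neighborhood $V$ of $b$ in $\Theta$, the contribution of the denominator on $V\cap\Theta_{i_k}$ is bounded below by $\mu_0^{1/\alpha}\int_{V\cap\Theta_{i_k}}J\,d\theta$, which diverges as $i_k\to\infty$ by non-integrability of $J$ near $b$, while the corresponding numerator contribution is bounded by $(\sup_V g)$ times the same quantity. Since $g\to 0$ at $b$, shrinking $V$ makes this factor arbitrarily small. Complementary contributions on $\Theta\setminus V$ are uniformly controlled using the second integrability hypothesis $\int Jg^{1/\alpha+1}<\infty$. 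Together these drive the ratio to $0$, contradicting the boundedness away from $0$ of $(c_i)$; the symmetric case $\mu_i\to\infty$ is ruled out analogously by noting that it forces the ratio to behave like $\int_{\Theta_i}Jg/\int_{\Theta_i}J$, which vanishes in the limit by the same reasoning.

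Once $\mu_i\to 0$ is established, the dominated convergence theorem applied on each compact $K\subset\Theta$ with the envelope $J(\sup_i\mu_i+g)^{1/\alpha}$, locally integrable in view of the hypotheses, yields $L^1(K)$-convergence of $J(\mu_i+g)^{1/\alpha}$ to $Jg^{1/\alpha}$, hence Q-vague convergence of $[\pi^\ast_i]$ to $[\pi^\ast]$. Definition \ref{defi:quasiRefprior} then identifies $\pi^\ast\propto Jg^{1/\alpha}$ as a quasi $D_\alpha$-reference prior over $\overline\cP'$, and $\int_\Theta\pi^\ast g\propto\int_\Theta Jg^{1/\alpha+1}\,d\theta<\infty$ is exactly the second integrability assumption. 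The main obstacle is the rigorous asymptotic control of the constraint equation: it requires a careful partition of $\Theta$ into a shrinking neighborhood of $b$ where $J$ is the dominant factor, and a complementary region regulated by the integrability of $Jg^{1/\alpha+1}$ — the same interplay between Jeffreys' decay rate and the weight $g$ that underlies Theorem \ref{thm:Jthetaa}.
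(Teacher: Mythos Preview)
Your overall plan matches the paper's: Proposition~\ref{prop:constraints} on each $\Theta_i$ gives $\pi_i^\ast=J(\lambda_i^{(1)}+\lambda_i^{(2)}g)^{1/\alpha}$, and one must then establish Q-vague convergence to $Jg^{1/\alpha}$. The asymptotic analyses differ substantially, however. The paper does \emph{not} first fix the sign of $\lambda_i^{(2)}$; it keeps the normalized pair $(\lambda_i^{(1)},\lambda_i^{(2)})$ and applies Jensen's inequality (convexity of $x\mapsto x^{1/\alpha}$) to the identities $1=\int_{\Theta_i}\pi_i^\ast$ and $c_i=\int_{\Theta_i}\pi_i^\ast g$, obtaining inequalities that are \emph{linear} in $(\lambda_i^{(1)},\lambda_i^{(2)})$, such as
\[
1\ \geq\ \lambda_i^{(1)}\Bigl(\int_{\Theta_i}J\Bigr)^{\alpha}+\lambda_i^{(2)}\,\frac{\int_{\Theta_i}Jg}{(\int_{\Theta_i}J)^{1-\alpha}}.
\]
Combined with $\int_{\Theta_i}J\to\infty$, $\int_{\Theta_i}Jg\to\infty$, $\int Jg^{1+1/\alpha}<\infty$, and only the \emph{upper} bound on $(c_i)$, these rule out any positive subsequential limit of $\lambda_i^{(1)}$; the remaining subsequential limits in $[-\infty,0]$ are then handled via the pointwise positivity $\lambda_i^{(1)}+\lambda_i^{(2)}g>0$ on $\Theta_i$ together with $\min_{\Theta_i}g\to0$, which forces $\lambda_i^{(1)}/\lambda_i^{(2)}\to0$.

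Your ratio-equation route is more direct but has a gap at the ``short preliminary step''. For fixed $i$ the boundary point $b$ lies outside the compact $\Theta_i$, so $g$ is bounded below on $\Theta_i$ by some $m_i>0$ and the phrase ``concentrate near $b$'' has no content there: $\lambda_{1,i}\leq0$ does not force the $g$-moment to vanish, it only forces $c_i\leq\int_{\Theta_i}Jg\big/\int_{\Theta_i}J$, which is perfectly compatible with $c_i>0$. An asymptotic version of this sign argument would require $(c_i)$ to be bounded away from $0$, which the theorem does not assume. Two smaller points: you omit the case $\mu_i\in(-m_i,0)$ (easy, since positivity of $\mu_i+g$ on $\Theta_i$ already gives $|\mu_i|<m_i\to0$), and your control of the complement $\Theta_i\setminus V$ tacitly needs $(\mu_i)$ bounded above, so the case split should be stated as ``$(\mu_i)$ bounded with $\liminf\mu_i>0$'' versus ``$\mu_i\to\infty$'' rather than the overlapping ``$\mu_{i_k}\geq\mu_0$'' and ``$\mu_i\to\infty$''. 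The paper's convexity approach sidesteps all of these issues by never reparametrizing and treating every sign configuration uniformly.
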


\section{Discussions}\label{sec:discussion}

The knowledge of Jeffreys prior's decay rates is central in the results presented in this work. These results indicate that in common scenarios where Jeffreys prior is improper, these rates must be explicitly considered in order to construct a reference prior. 

We let the reader note that Theorems \ref{thm:lintoproper} and \ref{thm:quasipostpropre} introduce results that also depend on the chosen dissimilarity measure. Therefore, a balance must be found between the subjective influence of the constraint and the quest for an informed prior to facilitate possible sampling from the posterior. This is illustrated in the example we address in the following section.
However, it is important to observe that using the KL-divergence instead of the $\alpha$-divergence %
would result in a stronger influence of the constraint on the final prior. As noted in Remark \ref{rem:klconst}, the exponentialization of the function $g$ could lead to a prior with distribution tails that are significantly negligible beyond those of Jeffreys, thereby jeopardizing its objective nature.

Generally, the results we propose in Sections \ref{sec:constrainedasympt} and \ref{sec:constrainedproper} address different problems and are thus  fundamentally different in nature. In one case, the improper aspect of Jeffreys prior is not necessarily challenged, and an efficient construction of the latter is proposed. In the other case, the goal is to significantly attenuate its improper aspect while maintaining as much objectivity as possible. In this latter case, however, the expressions of the proposed  reference priors still depend on the expression of Jeffreys prior. Nevertheless, when Jeffreys prior is proper, there is no guarantee that a straightforward construction inspired by its convergence rates at the domain boundaries will be relevant. %
Indeed, although improper tails concentrate an infinite mass that constitutes all the information at the boundaries, when they are proper, the information of interest may need to be sought elsewhere. In this case, a calculation or approximation of the `proper' Jeffreys prior remains to be considered.

Finally, regarding Theorem \ref{thm:Jthetaa}, although the result is limited to parameter power distribution tails, it is observed that, in practice, these include a wide range of improper Jeffreys priors.
For example, this includes Jeffreys priors derived from various Gaussian models, such as those introduced by \citet{Neyman1948}; Jeffreys priors related to specific parameters within Gaussian process models \cite{Gu2016}; and those arising in more specialized contexts, like the one in \cite{VanBiesbroeck2023}.
Moreover, the invariance of Jeffreys priors under reparameterization can sometimes allows us to return to this case. Specifically, if $J$ can be asymptotically written as a power of a function $f$, where $f$ is differentiable, monotone and with bounded derivative (from above and from below), then the reparameterization $\vartheta=f(\theta)$ %
should allow us to recover the reference prior among those expressible as powers of $f$.

In the following section, we illustrate an application of our work with an example taken from the literature.

\section{An example}\label{sec:example}

In their work, \citet{Rubio2014} prove that the two piece location-scale model they proposed has an improper Jeffreys prior, which issues an improper posterior.
The model is parameterized by $\theta=(\mu,\sigma_1,\sigma_2)\in\RR\times(0,\infty)^2$, inferred over observations in $\cY=\RR$. It has the following likelihood:
    \begin{equation}\label{eq:examplelikilihood}
        \ell(y|\theta) =  \frac{2}{\sigma_1+\sigma_2}\left[f\left(\frac{y-\mu}{\sigma_1}\right)\indic_{(-\infty,\mu)}(y) + f\left(\frac{y-\mu}{\sigma_2}\right)\indic_{(\mu,\infty)}(y)  \right],
    \end{equation}
where $f$ is a density function with support on $\RR$, assumed to be symmetric with a single mode at zero, and with a few integrability assumptions that are detailed in \citep{Rubio2014}.
    The choice of $f$ is open and can let the assumptions of Theorem \ref{thm:l(pi)} to be verified.
    We may take, for instance, the standard Gaussian density function.
    Under this construction, the Fisher information matrix of this model takes the form:
    \begin{equation}\label{eq:fishermatrix}
        \cI(\theta) = \left(\begin{array}{ccc}
             \frac{\alpha_1}{\sigma_1\sigma_2}& -\frac{2\alpha_3}{\sigma_1(\sigma_1+\sigma_2)} & \frac{2\alpha_3}{\sigma_2(\sigma_1+\sigma_2)}  \\
             \ast &\frac{\alpha_2}{\sigma_1(\sigma_1+\sigma_2)} + \frac{\sigma_2}{\sigma_1(\sigma_1+\sigma_2)^2} & - \frac{1}{(\sigma_1+\sigma_2)^2}  \\
             \ast&\ast& \frac{\alpha_2}{\sigma_2(\sigma_1+\sigma_2)} + \frac{\sigma_1}{\sigma_2(\sigma_1+\sigma_2)^2}
        \end{array}\right)
    \end{equation}
    for some positive constants $\alpha_1$, $\alpha_2$ and $\alpha_3$.

    The full Jeffreys prior can be computed as 
    \begin{equation}
        J(\theta) \propto \frac{1}{\sigma_1\sigma_2(\sigma_1+\sigma_2)},
    \end{equation}
    it is improper and leads to an improper posterior. In the following, we construct different priors based on the suggestions developed in this paper. %

    \paragraph{Proper priors based on a moment constraint}
    Considering results in Section \ref{sec:constrainedproper} and the decay rates of the Jeffreys prior above, a simple correction can be done to issue a proper reference prior w.r.t. $\sigma_1,\sigma_2$, which results in a proper posterior.
    We consider $\alpha\in(0,1)$; given $\eps\in(0,\frac{1}{1+1/\alpha})$ we have 
        \begin{equation}
            \int J(\theta)(\sigma_1\sigma_2)^{\eps/\alpha+1} d\sigma_1d\sigma_2<\infty,
        \end{equation}
    so that the associate proper $D_\alpha$-reference prior $\pi^\ast$, which is such that $\pi^\ast(\mu,\sigma_1,\sigma_2)\sigma_1^\eps\sigma_2^\eps$ is integrable w.r.t. $\sigma_1,\sigma_2$, is
        \begin{equation}
            \pi^\ast(\mu,\sigma_1,\sigma_2) \propto \frac{(\sigma_1\sigma_2)^{\eps/\alpha -1}}{\sigma_1+\sigma_2}.        
        \end{equation}

\paragraph{Overview and sensibility on the parameters}
Our prior densities are compared with the Jeffreys prior on Figure \ref{fig:priorpost}.(a), w.r.t. the parameter $\sigma_2$ the others being fixed to $1$. For this comparison, a multiplicative constant had to be chosen on $J$, we have chosen the one such that $J(1,1,1)=2$.
Our priors differ as a function of $\gamma=\eps/\alpha\in(0,\frac{1}{1+\alpha})\subset(0,1)$. On the one hand, when $\gamma$ becomes close to $0$, the prior ---which we denote by $\pi_\gamma^\ast$ from now on--- becomes close to the Jeffreys prior, i.e., the most objective prior w.r.t. the mutual information criterion. However, in this case, $\pi^\ast_\gamma$ becomes close to an improper prior, and its posterior becomes close to an improper posterior. On the other hand, setting $\gamma$ away from $0$ 
rearranges the quantity of information in the prior. Its referential nature decreases in favor of an increase in its entropy.
Therefore, a trade-off has to be made between suitability for inference and objectivity.
Finally, note that in Figure \ref{fig:priorpost}.(a) is also drawn the \emph{independent Jeffreys prior} proposed by the authors in \citep{Rubio2014} for this model as an alternative to Jeffreys. Its decay rates are actually the same as the ones of $\pi^\ast_\gamma$ when $\gamma=1/2$, and the two priors are hard to distinguish. Note that this latter prior $\pi^\ast_{1/2}$ equals the hierarchical reference prior (as described in Section \ref{sec:issuesJeff}) constructed from the ordering $\pi(\theta)=\pi_1(\sigma_1,\sigma_2|\mu)\pi_2(\mu)$.

To evaluate a bit further our method, we propose a visualization of the posterior sensitivity to the priors, i.e., to $\gamma$. 
Such influence quantification constitutes a critical step of the \emph{Bayesian workflow}, as expressed by   \citet{gelman2020bayesianworkflow}. Several methods exist  in the literature for this purpose (e.g., \citep{Berger1990, Nott2020}). 
An approach is to compare the variations of an \emph{a posteriori} quantity as a function of the parameter \citep{Kallioinen2023}. In this example, this methodology is yet limited by the improper aspect of Jeffreys posterior. It cannot be considered for any comparison.
In Figure \ref{fig:priorpost}.(b), (c) and (d) are plotted, for numerous $\gamma$ and several data set sizes $k$, the posterior densities that result from a sample of data and from the prior $\pi^\ast_\gamma$.
As expected, the influence of the prior appears for small values of $k$.
Indeed, we can notice that for higher values of $\gamma$, the posterior is slightly more shifted to the right and seems to be a little flatter. 
It is remarkable that this observation becomes limited when $k$ increases. When $k=50$, the difference between the posterior densities 
is hard to distinguish. %
This indicates that the little losses of objectivity should induce small variations in the resulting inference in this example.

For practical details, the chosen $f$ is the standard Gaussian density, and the different data sets have been generated according to the likelihood in Equation (\ref{eq:examplelikilihood}) conditionally to $\theta^\ast=(2,2,2)$. %

\begin{figure}[!ht]%
    \centering%
    \includegraphics[width=0.45\linewidth]{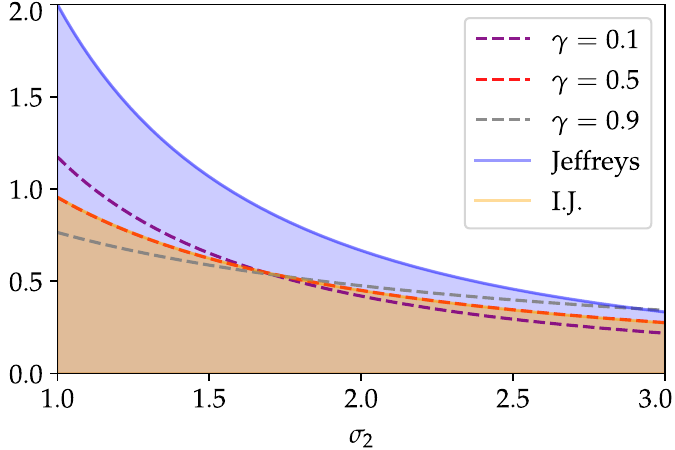}%
    \includegraphics[width=0.45\linewidth]{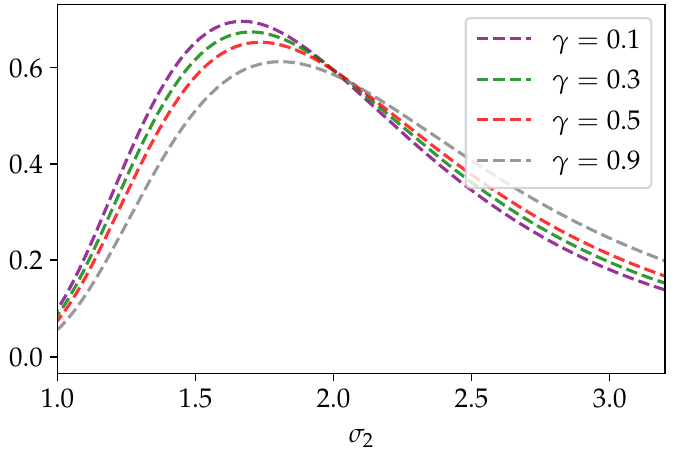}\\
    {~\hspace{\stretch{1}}(a)\hspace{\stretch{2}}(b)\hspace{\stretch{1}}~}\\[2pt]
    \includegraphics[width=0.45\linewidth]{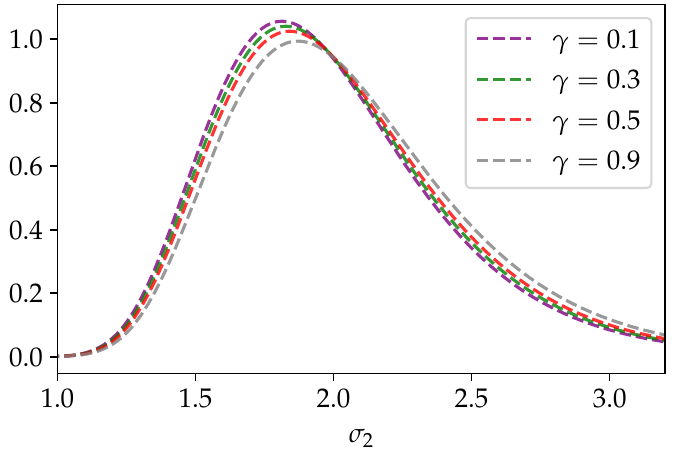}%
    \includegraphics[width=0.45\linewidth]{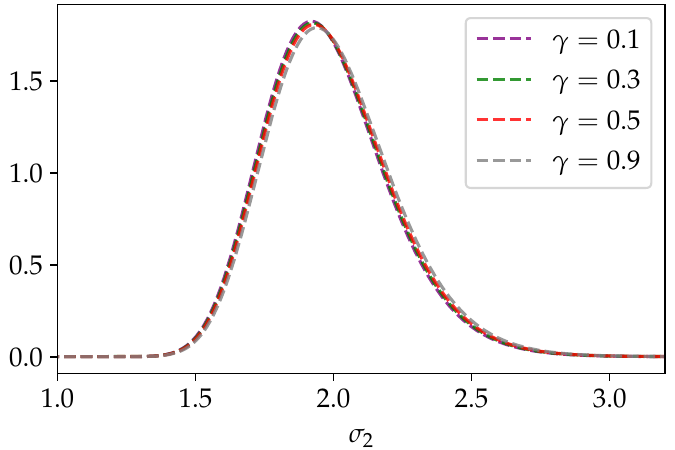}\\
    {~\hspace{\stretch{1}}(c)\hspace{\stretch{2}}(d)\hspace{\stretch{1}}~}%
    \caption{On (a), different prior densities $\pi^\ast_\gamma$ (in dashed line) w.r.t. $\sigma_2$ along with a Jeffreys prior density (that delimits the blue area) and the \emph{Independent Jeffreys} of \citep{Rubio2014} (that delimits the orange area). On each of (b), (c) and (d), posterior densities for different values of $\gamma$ are plotted. %
    One observed sample was used to calculate all the posteriors in a figure.
    The observed sample has a size of $k=5$ in (b), $k=15$ in (c), and $k=50$ in (d).\label{fig:priorpost}}
\end{figure}

\section{Detailed proofs}\label{sec:proofs}

\subsection{Proof of Proposition \ref{prop:kolmog}}

Let $1\leq\mbf i\leq\infty$, $I=\{i\in\NN,\,i<\mbf i\}$, and $(U_i)_{i\in I}$ be a sequence of disjoint non-empty sets in $\sT$ that covers $\Theta$ such that $\pi(U_i)<\infty$ $\forall i$.
When $\pi(U_i)=0$, we denote by $\pi(\cdot|U_i)$ an arbitrary probability distribution on $U_i$ (for instance a Dirac distribution); otherwise, it is defined by $\pi(\cdot|U_i)=\pi(\cdot\cap U_i)/\pi(U_i)$.
    Fix firstly $i\in I$ and for any $k\geq1$ let us construct $\PP_{\mbf Y_k,T}^i$ on $\sY^{\otimes k}\otimes\sT$ such that
        \begin{equation}
            \PP_{\mbf Y_k,T}^i(A_1\times\dots\times A_k\times B) = \int_B \PP_{Y|\theta}^{\otimes k}(A_1\times\dots\times A_k)d\pi(\theta|U_i),
        \end{equation}
     so that $\PP^i_{\mbf Y_k,T}$ is a probability distribution on $\cY^k\times U_i$. 
    Therefore, using Kolmogorov extension theorem, there exist a probability space $(\Omega^i,\sX^i,\PP^i)$, a random process $\overline{\mbf Y^i}=(Y^i_j)_{j\geq1}$, $Y_j^i:\Omega^i\to\cY$, and a random variable $T^i:\Omega^i\to\Theta$ such as  their joint distribution is uniquely defined by the distribution
    $\PP^i_{\mbf Y_k,T}$ of $(T^i,\mbf Y_k^i)$ for any $k$, with $\mbf Y_k^i=(Y_j^i)_{j=1}^k$.

Denote $\mbf\Omega= I \times\prod_{i\in I}\Omega^i$. A natural $\sigma$-algebra on $\mbf\Omega$ is the one generated by the cylinder sets:
\begin{equation}
    \mbf\Xi = \sigma\big(N\times E_0\times\dots\times E_{k-1}\times\prod_{k\leq i<\mbf i}\Omega^i,\,N\in\sP(I),\, E_0,\dots,E_k\in\sX^i,\,k<\mbf i \big).
\end{equation}
We can define $\Pi$ by
    \begin{equation}
        \Pi\Big(N\times\prod_{i\in I}E_i\Big) = \sum_{n\in N}\PP^n(E_n)\pi(U_n),
    \end{equation}
for any $N\in\sP(I)$, $(E_i)_{i \in I} \in\prod_{i\in I}\sX^i$ with $E_i=\Omega^i$ for any $i\geq k$ for some $k<\mbf i$. 
The existence of the measure $\mbf\Pi$ on $\mbf\Xi$, that coincides with $\Pi$ on the cylinder sets is guaranteed by  Carathéodory's extension Theorem. %

Now, we call $p^n:(m,(\omega^i)_i)\in\mbf\Omega\mapsto \omega^n$ for any $n\in I$, $c:(m,(\omega^i)_i)\in\mbf\Omega\mapsto m$, and $p:\mbf w\in\mbf\Omega\mapsto p^{c(\mbf w)}(\mbf w)$.
This way, we can define $T:\mbf\Omega\to\Theta$ by $T(\mbf w)=T^{c(\mbf w)}(p(\mbf w))$ and $\overline{\mbf Y}=(Y_j)_{j\geq1}$ with for any $j\geq1$, $Y_j:\mbf\Omega\to\sY$ by $Y_j(\mbf w)=Y_j^{c(\mbf w)}(p(\mbf w))$.\\
We can verify that $T$ is a measurable: if $B\in\sT$,  
    \begin{equation}
        T^{-1}(B) = \bigcup_{i\in I}\{\mbf w\in\mbf\Omega,\, c(\mbf w)=i,\,T^i(p(\mbf w))\in B  \} = \bigcup_{i\in I}c^{-1}(\{i\})\cap (T^i(p^i))^{-1}(B)
    \end{equation}
which is measurable. 
The same arguments stand for the measurability of $Y_j$ for any $j$.

We have,
for any $V\in\sT$:
    \begin{align}
          \mbf\Pi(T\in V) &= \Pi\Big(\bigcup_{i\in I}\{i\}\times\Omega^1\times\dots\times\{T^i\in V\}\times\Omega^{i+1}\times\dots\Big)\\
          &=\sum_{i\in I}\PP^i(T^i\in V)\pi(U_i) = \sum_{i\in I}\pi(V|U_i)\pi(U_i) = \pi(V);\nonumber
    \end{align}
and if $\pi(V)<\infty$ then for any $(A_{j_l})_{l=1}^k\in\sY^k$: %
    \begin{multline}
        \mbf\Pi\left((Y_{j_l})_{l=1}^k\in(A_{j_l})_{l=1}^k|T\in V\right)\mbf\Pi(T\in V) \\ = 
        \mbf\Pi\Big(\bigcup_{i\in I}  \{i\}\times\Omega^1\times\dots\{(Y^i_{jl})_{l=1}^k\in(A_{jl})_{l=1}^k\}\cap \{T^i\in V \}\times\Omega^{i+1}\dots 
        \Big),
    \end{multline}
    so that
        \begin{align}\nonumber
        &\mbf\Pi\left((Y_{j_l})_{l=1}^k\in(A_{j_l})_{l=1}^k|T\in V\right) 
            = \sum_{i\in I}\PP^i\left( \{(Y_{j_l}^i)_{l=1}^k\in(A_{j_l})_{l=1}^k\}\cap \{T^i\in V\}\right)\frac{\pi(U_i)}{\pi(V)} \\
            &= \sum_{i\in I}\int_{V\cap U_i}\PP^{\otimes k}_{Y|\theta}\left((A_{j_l}\right)_{l=1}^k)\frac{d\pi(\theta)}{\pi(V)}  = \int_\Theta\PP^{\otimes k}_{Y|\theta}\left((A_{j_l})_{l=1}^k\right)d\pi(\theta|V).
    \end{align}

We precise that the above expressions rely on the fact that $\mbf\Pi(I\times\prod_{i\in I}E_i)=\sum_{i\in I}\PP^i(E_i)\pi(U_i)$, for any $E_i\in\sX^i$, for every ${i\in I}$. To state that equality, we write $I\times\prod_{i\in I}E_i=\bigcup_{n\geq1}\bigcap_{k<\mbf i}(I^n\times\prod_{i\in I}E_{i,k}') $ with $I^n=\{i\in I,\,i< n\}$ and $E_{i,k}'=E_i$ if $i\leq k$, $E_{i,k}'=\Omega^i$ otherwise. This way for any $n\geq1$:
$I^n\times\prod_{i\in I} E_{i,k}' \subset I^{n+1}\times\prod_{i\in I} E_{i,k}'$ for any $k<\mbf i$, and
$        \bigcap_{k<\mbf i} I^n\times\prod_{i\in I} E_{i,k}'\subset \bigcap_{k<\mbf i} I^{n+1}\times\prod_{i\in I} E_{i,k}'$.
Thus,
    \begin{equation}\label{eq:proofKolmLimPiCup}
        \mbf\Pi\Big(I\times\prod_{i\in I}E_i\Big) 
            = \lim_{n\rightarrow\infty}\mbf\Pi\Big(\bigcap_{k<\mbf i}(I^n\times\prod_{i\in I}E_{i,k}')\Big).
    \end{equation}

For any $n\geq1$, $ %
(I^n\times\prod_{i\in I}E_{i,k}')$ is a decreasing sequence, because $E_{i,k+1}'\subset E_{i,k}'$ for any $k<\mbf i-1$. 
Also $\mbf\Pi(I^n\times\prod_{i\in I}E_{i,0}')=\sum_{i=0}^{n-1}\pi(U_i)<\infty$.
Therefore, we can write:
    \begin{align}
        \mbf\Pi\Big(\bigcap_{k<\mbf i}(I^n\times\prod_{i\in I}E_{i,k}')\Big) 
            &= \lim_{k\rightarrow\infty} \mbf\Pi\Big(I^n\times\prod_{i<k,\mbf i}E_i\times\prod_{k\leq i<\mbf i}\Omega^i\Big) \\
            &= \lim_{k\rightarrow\infty} \Big[\sum_{i<k,n,\mbf i}\PP^i(E_i)\pi(U_i)  + \sum_{k\leq i<n,\mbf i}\pi(U_i)\Big]=\sum_{i<n,\mbf i}\PP^i(E_i)\pi(U_i),\nonumber
    \end{align}
because the sums are all finite. Eventually, the limit in Equation (\ref{eq:proofKolmLimPiCup}) is equal to $\sum_{i<\mbf i}\PP^i(E_i)\pi(U_i)$ as expected, hence the result.

\subsection{Proof of Theorem \ref{thm:Jthetaa}}

To prove this theorem, we consider to simplify the derivations that $b=0$ with $\Theta=(0,1]$. We will show later how to extend the result to the other cases. As the Jeffreys prior can be defined up to a positive multiplicative constant with no incidence on the definition of a $D_\alpha$-reference prior, 
we will simplify its decay rate assuming $J(\theta)\equi{\theta\rightarrow 0}\theta^a$.

Let us define the increasing sequence of compact subsets $\Theta$: $\Theta_i=[\theta_i,1]$, $i\geq0$, with $\theta_i\conv{i\rightarrow\infty}0$. We denote by $\psi_i$ and $\tilde\psi_i$ the functions defined as follow:
    \begin{equation}
        \psi_i(u) = -\frac{\int_{\Theta_i}J(\theta)^{-\alpha}\theta^{u(1+\alpha)}d\theta }{\left(\int_{\Theta_i}\theta^ud\theta \right)^{1+\alpha}},\qquad \tilde\psi_i(u) = -\frac{\int_{\Theta_i}\theta^{-a\alpha }\theta^{u(1+\alpha)}d\theta }{\left(\int_{\Theta_i}\theta^ud\theta \right)^{1+\alpha}}.
    \end{equation}
The quantity $\psi_i(u)$ corresponds ---up to a positive constant--- to the parametrization w.r.t. $u\in\RR$ of $l(\pi_i)$; where $\pi(\theta)\propto\theta^u$, $\pi_i$ is re-normalized restriction of $\pi$ to $\Theta_i$  and where $l$ is the function defined in Theorem \ref{thm:l(pi)} that we seek to maximize.
Therefore, if we call $u_i\in\argmax_{u\in\RR}\psi_i(u)$ for any $i$, and if the associated sequence of priors $(\pi_i)_i$ converge Q-vaguely to a prior $\pi^\ast$ over $\Theta$, that would prove that $\pi^\ast$ is a quasi $D_\alpha$-reference prior.

\paragraph{The case $\mathbf{a<-1}$}
    Firstly, we assume that $a<-1$. Denote $U=(-\infty, u_m:=\frac{\alpha a-1}{1+\alpha})$, we want to derive an asymptotic equivalent when $i\to\infty$ of $\psi_i(u)$, uniformly w.r.t. $u\in U$.
    More precisely, we are about to show that for any $\eps>0$ there exists a $i_1$ such that for any $u\in U$ and $i>i_i$, $|\psi_i(u)-\tilde\psi_i(u)|<\eps|\tilde\psi_i(u)|$. %

    Let $\eps>0$, using that $J(\theta)^{-\alpha}\equi{\theta\rightarrow0}\theta^{-\alpha a}$, there exists $i_0$ such that for any $\theta<\theta_{i_0}$, $|J(\theta)^{-\alpha}-\theta^{-\alpha a}|<\eps\theta^{-\alpha a}$. %
    Now, we consider an $i_1$ such that: %
        \begin{equation}
            \frac{\int_{\Theta_{i_0}}J(\theta)^{-\alpha}d\theta }{\int_{\theta_{i_1}/\theta_{i_0}}^1J(\theta_{i_0}\theta)^{-\alpha}\theta^{u_m(1+\alpha)} d\theta}<\theta_{i_0}\eps/2\quad \text{and}\quad
            \frac{\int_{\Theta_{i_0}}\theta^{-\alpha a}d\theta }{\int_{\theta_{i_1}/\theta_{i_0}}^1(\theta_{i_0}\theta)^{-\alpha a}\theta^{u_m(1+\alpha)} d\theta}<\theta_{i_0}\eps/2.
        \end{equation}
    Thus, for any $i>i_1$  and any $u\in U$:
        \begin{multline}
            \frac{\int_{\Theta_{i_0}} J(\theta)^{-\alpha}\theta^{u(1+\alpha)}  d\theta}{\int_{\Theta_{i}\setminus\Theta_{i_0} } J(\theta)^{-\alpha }\theta^{u(1+\alpha)}d\theta }
            = 
            \frac{\theta_{i_0}^{-1}\int_{\theta_{i_0}}^1 J(\theta)^{-\alpha}(\frac{\theta}{\theta_{i_0}})^{u(1+\alpha)}  d\theta}{\int_{\theta_{i}/\theta_{i_0}}^1 J(\theta_{i_0}\theta)^{-\alpha }\theta^{u(1+\alpha)} d\theta }\\
            < \frac{\theta_{i_0}^{-1}\int_{\Theta_{i_0} }J(\theta)^{-\alpha} d\theta}{\int_{\theta_{i_1}/\theta_{i_0} }^1 J(\theta_{i_0}\theta)^{-\alpha }\theta^{u_m(1+\alpha)}d\theta} < \eps/2
        \end{multline}
    and
        \begin{multline}
            \frac{\int_{\Theta_{i_0}} \theta^{-\alpha}\theta^{u(1+\alpha)}  d\theta}{\int_{\Theta_{i}\setminus\Theta_{i_0} } \theta^{-\alpha }\theta^{u(1+\alpha)}d\theta }
            = 
            \frac{\theta_{i_0}^{-1}\int_{\theta_{i_0}}^1 \theta^{-\alpha}(\frac{\theta}{\theta_{i_0}})^{u(1+\alpha)}  d\theta}{\int_{\theta_{i}/\theta_{i_0}}^1 (\theta_{i_0}\theta)^{-\alpha }\theta^{u(1+\alpha)} d\theta }\\
            < \frac{\theta_{i_0}^{-1}\int_{\Theta_{i_0} }\theta^{-\alpha} d\theta}{\int_{\theta_{i_1}/\theta_{i_0} }^1 (\theta_{i_0}\theta)^{-\alpha }\theta^{u_m(1+\alpha)}d\theta} < \eps/2,
        \end{multline}
    so that $|\psi_i(u)-\tilde\psi_i(u)|<\tilde\eps|\tilde\psi_i(u)|$ as expected. %

    Now we want to use this asymptotic equivalence to bound the difference $|a-u_i|$, where $u_i$ is defined as a maximal argument of $\psi_i$. The next step is thus to show that such $(u_i)_i$ exists.
    
    There exist $\tilde K$, $\tilde K'$ such that $\tilde K\theta^{-\alpha}\leq J(\theta)^{-\alpha}\leq \tilde K'\theta^{-\alpha}$. 
    Let $i\geq0$, we can write
        \begin{equation}\label{eq:gendarmeJeffreys}
            \tilde K |\tilde\psi_i(u)| \leq|\psi_i(u)| \leq \tilde K'|\tilde\psi_i(u)|
        \end{equation}
    with 
        \begin{equation}
            |\tilde\psi_i(u)| = \theta_i^{-\alpha(a+1)}\frac{|u+1|^{1+\alpha}}{|\alpha(u-a)+u+1|}\frac{1-\theta_i^{-\alpha(u-a)-u-1}}{(1-\theta_i^{-u-1})^{1+\alpha}}  \conv{u\rightarrow-\infty}+\infty.
        \end{equation}
    That makes $|\psi_i|$ being a coercive and continuous function on $U$, so that it admits minimal arguments in $U$. We denote by $u_i$ one of them: 
        \begin{equation}
            u_i \in \argmax_{u\in U} \psi_i(u).
        \end{equation}

    We remind that given \cite[Proposition 3]{VanBiesbroeckBA2023}, $a$ is the only maximal argument of $\tilde\psi_i$ for any $i$. 
    This way, for $i>i_1$, we write
        \begin{align}
            |\tilde\psi_i(u_i)-\tilde\psi_i(a)|&\leq |\psi_i(u_i)-\tilde\psi_i(u_i)| +  |\psi_i(a)-\tilde\psi_i(a)| + |\psi_i(u_i)-\psi_i(a)| \nonumber\\
            \tilde\psi_i(a) - \tilde\psi_i(u_i) &%
            \leq \eps(|\tilde\psi_i(u_i)|+|\tilde\psi_i(a)|)+ \psi_i(u_i)-\psi_i(a),
        \end{align}
    which leads to 
        \begin{align}
            2(\tilde\psi_i(a)-\tilde\psi_i(u_i)) &%
            \leq \eps(|\tilde\psi_i(u_i)|+|\tilde\psi_i(a)|) +  \psi_i(u_i)-\tilde\psi_i(u_i)+\tilde\psi_i(a)-\psi_i(a) \nonumber\\
            \tilde\psi_i(a) - \tilde\psi_i(u_i) &%
            \leq \eps(|\tilde\psi_i(u_i)|+|\tilde\psi_i(a)|).
        \end{align}
    Consequently to the convergence of
     $(\theta_i^{\alpha(a+1)}\tilde\psi_i(a))_i$ toward a positive limit when $i\to\infty$, we deduce that $\theta_i^{\alpha(a+1)}(\tilde\psi_i(a)-\tilde\psi_i(u_i))/|\theta_i^{\alpha(a+1)}\tilde\psi_i(u_i)|$ is asymptotically null. This prevents the sequence $(\theta_i^{\alpha(a+1)}\tilde\psi_i(u_i))_i$ to admit a non finite subsequential limit, meaning it has to be bounded and to converges to the same limit as $(\theta_i^{\alpha(a+1)}\tilde\psi_i(a))_i$, i.e. $-|a+1|^\alpha$.

     On another hand, we notice that for any $M>0$, there exist a $M'$ such that for any $u<M'$, 
        \begin{equation}
            \frac{|u+1|^{1+\alpha}}{|\alpha(u-a)+u+1|}>M
        \end{equation}
     and $|\theta^{\alpha(a+1)}\tilde\psi_i(u)|>M$ for any $i\geq0$. Thus, as $(\theta^{\alpha(a+1)}\tilde\psi_i(u_i))_i$ has been proven to be bounded, so must be $(u_i)_i$.

     To conclude on that sequence, we denote by $\rho$ a finite subsequential limit of $(u_i)_i$, if $\rho\ne u_m$ then deriving the limit of $\theta^{\alpha(a+1)}\tilde\psi_i(u_i)$ leads to 
        \begin{align}
           & -\frac{|\rho+1|^{1+\alpha}}{\alpha(\rho-a)+\rho+1} = |a+1|^\alpha \nonumber\\
           \text{i.e.}\quad & -|\rho+1|(|\rho+1|^\alpha-|a+1|^\alpha) = |a+1|^\alpha \alpha(\rho-a);
        \end{align}
    necessarily, $\rho=a$.
    It remains to prove that $\rho=u_m$ is absurd. Indeed, in this case the integrals $\int_{\Theta_i}\theta^{-\alpha a+u_i(1+\alpha)}d\theta$ converge either to $0$, either to $+\infty$. Therefore, that would make $(\theta_i^{\alpha(a+1)}\tilde\psi_i(u_i))_i$ converging either to $-\infty$, either to $0$, which in both case is different to $-|a+1|^\alpha$.

    Let us now work beyond the subset $U$ of $\RR$. First, if $u\in(-1,+\infty)$, the integrals that compose $\psi_i(u)$ both admit finite and positive limits when $i\to\infty$. The limit of $(|\psi_i(u)|)_i$ is moreover bounded from below as a consequence of Equation (\ref{eq:gendarmeJeffreys}):
        \begin{equation}\label{eq:minorationpsiu1infty}
            |\psi_i(u)|\geq \tilde K\frac{(u+1)^{1+\alpha}}{ \alpha(u-a)+u+1}\frac{1-\theta_i^{\alpha(u-a)+u+1}}{(1-\theta_i^{u+1})^{1+\alpha}} \geq \tilde K'|\log\theta_i|^{-1-\alpha}.  %
        \end{equation}
    Thus, there exists $i_2\geq0$ such that for any $i>i_2$ $|\psi_i(a)|<\tilde K'|\log\theta_i|^{-1-\alpha}$, consequently to $\psi_i(a)\equi{i\rightarrow\infty}|a+1|^{\alpha}\theta_i^{-\alpha(a+1)}\aseq{i\rightarrow\infty}o(|\log\theta_i|^{-\alpha-1})$.
    As a result, for any $i>i_2$:
        \begin{equation}
            \sup_{u\in(-1,+\infty)}\psi_i(u)<\psi_i(a)\leq\psi_i(u_i).
        \end{equation}
    Finally, if $u\in(u_m,-1)$, analogously than in Equation (\ref{eq:minorationpsiu1infty}), we can write
        \begin{equation}
            |\psi_i(u)|\geq\tilde K(u+1)^{1+\alpha}\frac{|\log\theta_i|}{(1-\theta_i^{u+1})^{1+\alpha}} \geq \tilde K''\theta_i^{-(u_m+1)(1+\alpha)}|\log\theta_i|^{-\alpha}.
        \end{equation}
    Once again, we have $\psi_i(a)\aseq{i\rightarrow\infty}o(\theta_i^{-(u_m+1)(1+\alpha)}|\log\theta_i|^{-\alpha})$ and we can consider $i_3\geq 0$ such that for any $i>i_3$:
        \begin{equation}
            \sup_{u\in(u_m,-1)}\psi_i(u)<\psi_i(a)\leq\psi_i(u_i).
        \end{equation}

    All the work that precedes proves that any sequence $(v_i)_i$ defined by $v_i\in\argmax_{\RR}\psi_i$ converges to $a$.

\paragraph{The case $\mathbf{a=-1}$}
In this case, we easily get that $\psi_i(a)\equi{i\rightarrow\infty}-|\log\theta_i|^{-\alpha}$. 
When $u<\eta<a$, 
    \begin{equation}
        |\psi_i(u)|\geq\tilde K\frac{|u+1|^{\alpha}}{\alpha+1} \frac{1-\theta_i^{-(\alpha+1)(u+1)}}{(1-\theta_i^{-u-1})^{1+\alpha}}\geq \hat K {|\eta+1|^{\alpha}}(1-\theta_i^{-(\eta-1)({1+\alpha})})
    \end{equation}
and when $u>\tilde\eta>a$,
    \begin{equation}
        |\psi_i(u)|\geq\tilde K\frac{|u+1|^{\alpha}}{\alpha+1} \frac{1-\theta_i^{(\alpha+1)(u+1)}}{(1-\theta_i^{u+1})^{1+\alpha}}\geq \hat K' {|\tilde\eta+1|^{\alpha}}(1-\theta_i^{(\tilde\eta-1)({1+\alpha})}).
    \end{equation}

Thus, for any $\eps>0$, the equations above with $\eta=a-\eps$ and $\tilde\eta=a+\eps$ let state that there exists an $i_1\geq0$ such that for any $i>i_1$:
    \begin{equation}
        \sup_{u\in(-\infty,a-\eps)}\psi_i(u)<\psi_i(a)\quad\text{and} \quad\sup_{u\in(a+\eps,\infty)}\psi_i(u)<\psi_i(a)
    \end{equation}
so that $\argmax_{\RR}\psi_i\subset(a-\eps,a+\eps)$, which let the definition of a sequence $(u_i)_i$ of maximal arguments of $\psi_i$ which converges to $a$.

\paragraph{Q-vague convergence}
The conclusion concerning the Q-vague convergence of the $\pi_i^\ast$ defined from the $u_i$ constructed in the work above: $\pi_i^\ast(\theta)\propto\theta^{u_i}$ is a direct result of \cite[Proposition 2.16]{Bioche2016}. Indeed, the convergence of sequence $(u_i)_i$ toward $a$, implies that the sequence of our priors converges Q-vaguely to $\pi^\ast$ such that $\pi^\ast(\theta)\propto{\theta^a}$.

\paragraph{Extension to other set $\Theta$}
We shall now demonstrate that the proven result extends itself to the general case: $\Theta=[c,b)$ or $(b,c]$, $b\in\RR\cup\{-\infty,\infty\}$.

We first consider $\Theta=(b,c]$ with $b\in\RR$.
We denote by $\cQ$ the class of priors after the substitution $\vartheta = (\theta-b)/(c-b)\in T=(0,1)$:  $\cQ=\{\tilde\pi(\vartheta) = (b-c)\pi(\vartheta (b-c)+b),\,\pi\in\hat\cP\}$.
Thus, $\cQ = \{\pi(\theta)\propto\vartheta^u,\,u\in\RR\}$.
We define the increasing sequence of compact sets $(\Theta_i)_i$ by $\Theta_i = [b+t_i,c]$ with $t_i\conv{i\rightarrow}0$. Therefore, for $\pi\in\hat\cP$, calling $\pi_i$ the re-normalized restriction of $\pi$ to $\Theta_i$ gives
    \begin{equation}
        l(\pi_i) = \tilde l(\tilde\pi_i) = C_\alpha\int_{T_i}\tilde\pi_i(\vartheta)^{1+\alpha}\tilde J(\vartheta)^{-\alpha}d\vartheta
    \end{equation}
with $T_i=[t_i,1]$, $\tilde\pi_i(\vartheta) = (b-c)\pi_i(\vartheta (b-c)+b)\in\cQ$ and $\tilde J(\vartheta) = (b-c)J(\vartheta (b-c)+b)\equi{\vartheta\rightarrow0}\vartheta^a$.
Thus, the work done above states that the family of maximal arguments $\tilde\pi_i^\ast$ of $\tilde l$ over $\cQ_i$  constitutes a sequence of priors that converges Q-vaguely toward $\tilde\pi^\ast(\vartheta)\propto\vartheta^a$.
Thus, the associate priors $\pi^\ast_i$ maximize $l$ over $\cP_i$ and constitute a sequence that converges Q-vaguely toward $\pi^\ast(\theta)\propto|\theta-b|^a$.

To treat the other cases, other substitutions with analogous work permit to conclude:
(i) the substitution $\vartheta=(b-\theta)/(b-c)$
when $\Theta=(c,b)$, $b\in\RR$; (ii) the substitution $\vartheta=1/(|\theta-c|+1)$ when 
$\Theta=[c,\infty)$ or $(-\infty,c]$.

\subsection{Proof of Proposition \ref{prop:constraints}}

Let us start by the uniqueness.
Recall the definition of an openly increasing sequence of compact sets $(\Theta_i)_{i\in\NN}$: there exist $i_0\geq0$ and a sequence $(V_i)_{i\geq i_0}$ of open subsets of $\Theta$  such that for any $i\geq i_0$
    \begin{equation}
        \Theta_i\subset V_i\subset \Theta_{i+1}.
    \end{equation}
This way, $\bigcup_iV_i=\Theta$ and the compacity of $\Theta$ imposes it to  be a finite union, so that $\Theta_i=\Theta$ for any $i\geq i_1$ for some $i_1\geq0$.\\
Thus, any $D_\alpha$-reference prior $\pi^\ast$ over a convex class such as $\tilde\cP$ must maximize $l$ as a prior on $\Theta$. The mapping $\pi\mapsto l(\pi)$ being strictly convex when $\pi$ is seen as a function in $\cR^\cC$, such maximal argument is unique.

Regarding the expression of $\pi^\ast$,
let us call $E$ the space of bounded a.e. continuous functions from $\Theta$ to $\RR$ and we equip $E$ with the supremum norm over $E$: $\|f\|=\sup_\Theta|f|$. The set $\Theta$ being supposed compact, the pair $(E,\|\cdot\|)$ constitutes a Banach vector space whose restriction $U$ composed by the positive functions of $E$ is an open and convex subset.
It is possible to see $l$ as being a concave function defined on $U$.

Let us compute the differentiate of $l$ over $U$. One can write $l=\phi_2\circ\phi_1$ with
\begin{equation}
    \phi_1:\pi\in E\longmapsto \pi^{1+\alpha}\in E ;\qquad \phi_2:\pi\in E\longmapsto C_\alpha \int_\Theta \pi(\theta)|\cI(\theta)|^{-\alpha/2}d\theta.
\end{equation}
As $\phi_2$ is a continuous linear mapping from $E$ to $\RR$, $l$ is differentiate while $\phi_1$ is, with for any $h\in E$:
    \begin{equation}
        dl(\pi) = \phi_2\circ d\phi_1(\pi).
    \end{equation}
Fix $\pi\in U$. For an $\eps>0$ there exists $\tilde\eps>0$ such that while $|x|<\|\pi\|$ and $|u|<\tilde\eps$ then $|(x+u)^{1+\alpha}-x^\alpha-(1+\alpha)x^\alpha u|<\eps|u|$. Thus for any $h\in E$ such that $\|h\|<\tilde\eps$, we have
    \begin{equation}
        \|\phi_1(\pi+h) -\phi_1(\pi)-(1+\alpha)\pi^\alpha h\|<\eps\|h\|.
    \end{equation}
We conclude that $\phi_1$ differentiable on $U$ with $d\phi_1(\pi)h = (1+\alpha)\pi^\alpha h$, for any $\pi\in U$, $h\in E$. This differentiate is additionally continuous, which makes $l$ continuously differentiable as well.

Thus, considering the additional constraint $\int_\Theta\pi(\theta)d\theta=1$, this problem can be treated applying the Lagrange multipliers theorem (see e.g. \citep{Zeidler}) to state that there exist $\lambda_0,\dots,\lambda_p\in\RR^{p+1}$ such that
    \begin{equation}
        dl(\pi^\ast)h - \lambda\int_\Theta h(\theta)d\theta - \sum_{i=1}^p\lambda_i\int_\Theta h(\theta)g_i(\theta)d\theta = 0
    \end{equation}
for any $h\in E$. Eventually, as $dl(\pi)h=C_\alpha(1+\alpha)\int_\Theta\pi(\theta)^{\alpha}|\cI(\theta)|^{-\alpha/2}h(\theta)d\theta$, we get
    \begin{align}
        \pi^\ast(\theta) = J(\theta)\bigg(\lambda_0+\sum_{i=1}^p\lambda_ig_i(\theta)\bigg)^{1/\alpha}.
    \end{align}

Moreover,
as $l$ is strictly concave and the constraints are linear, the second order condition for the Lagrangian states the reciprocal aspect of this result: calling $C$ the space of functions satisfying the constraints, if $\pi^\ast\in U\cap C$ satisfies the last equation above, it maximizes $l$.

To conclude, the elements in $\tilde\cP$ are associated to densities which belong to $E$ because $\Theta$ is compact.
They are non necessarily everywhere positive, but only supposed to be non-negative.
This set of bounded a.e. continuous and non-negative functions satisfying the constraints is actually the closure of $U\cap C$ over which $l$ is continuous.
Therefore, calling $\tilde\cP^\ast$ the class of positive priors in $\tilde\cP$, we have $\argmax_{\pi\in\tilde\cP}l(\pi)=\argmax_{\pi\in\tilde\cP^\ast}l(\pi)$ which is $\pi^\ast$.

\subsection{Proof of Theorem \ref{thm:lintoproper}}

Assuming that it exists, denote by $\pi^\ast$ the $D_\alpha$-reference prior over $\overline\cP$, denote as well
    \begin{equation}
        c = \int_{\Theta}\pi(\theta)g(\theta)d\theta,\quad Z_i = \int_{\Theta_i}\pi(\theta)d\theta
    \end{equation}
for any $i\in \NN$, considering an openly increasing sequence $(\Theta_i)_{i\in\NN}$ of compact sets that cover $\Theta$ over which $1>\pi^\ast(\Theta_i)>0$ for any $i$ such that $\Theta_i\ne\Theta$ (we show later that they exist).
In particular, $\pi^\ast$ must be the $D_\alpha$-reference prior over the class $\cP^c=\{\pi\in\cP,\,\int_\Theta\pi g =c\}\subset\cP$.

Let $i\in \NN$, if $\pi_i$ is a prior over $\Theta_i$ such that 
\begin{equation}
    \int_{\Theta_i}\pi_ig + \frac{1}{Z_i}\int_{\Theta\setminus\Theta_i}\pi^\ast g = \frac{c}{Z_i},
\end{equation}
then the prior $\pi$ on $\Theta$ defined by $\pi=Z_i\pi_i+\pi^\ast\indic_{\Theta\setminus\Theta_i}$ %
belongs to $\cP^c$.
Therefore, denoting $\pi^\ast_i$, the renormalized restriction of $\pi^\ast$ to $\Theta_i$, $l(\pi^\ast_i)$ must be larger than $l(\pi_i)$ on $\Theta_i$ by definition of the reference prior.

Thus, $\pi^\ast_i$ is a maximal argument of $l$ under the constraints 
    \begin{equation}
        \int_{\Theta_i}\pi_i = 1,\qquad \int_{\Theta_i}\pi_i g=\frac{c}{Z_i}-\frac{1}{Z_i}\int_{\Theta\setminus\Theta_i}\pi^\ast g.
    \end{equation}
Given the result of Proposition \ref{prop:constraints}, $\pi^\ast_i$ takes the form of:
    \begin{equation}
        \pi^\ast_i = J\cdot(\lambda^{(1)}_i+\lambda^{(2)}_ig)^{1/\alpha}
    \end{equation}
for some $\lambda_i^{(1)}$ and $\lambda_i^{(2)}$.

Then, denoting as well $\pi^\ast_{i+1}=J\cdot(\lambda^{(1)}_{i+1}+\lambda^{(2)}_{i+1}g)^{1/\alpha}$, and reminding that $\pi^\ast_{i+1}\propto\pi^\ast_i$ on $\Theta_i$, we deduce that $\lambda^{(1)}_{i+1}=\lambda^{(1)}_{i}$ as well as $\lambda^{(2)}_{i+1}=\lambda^{(2)}_{i}$.
Eventually, $\pi^\ast\propto J\cdot (\lambda_1+\lambda_2g)^{1/\alpha}$.
Thus, in the neighborhood of $b$, $\pi^\ast(\theta)\equi{\theta\rightarrow b}K\lambda_1^{1/\alpha}$ for some $K\ne0$ if $\lambda_1$ is non-null, which is discordant with the satisfaction of the constraint $\int_\Theta\pi^\ast f<\infty$ in the case where $\int_\Theta Jf=\infty$ or with the constraint $\int_\Theta\pi^\ast =1$ otherwise.

Finally, $\pi^\ast$ is proportional to $g^{1/\alpha}$ and the value of $c$ must be fixed by
\begin{equation}
    c = \left(\int_\Theta J\cdot g^{1+1/\alpha}\right)\cdot\left(\int_\Theta J\cdot g^{1/\alpha}\right)^{-1}.
\end{equation}

To finish the proof, we still have to show that $\pi^\ast$ is not null on any of the $\Theta_i$, or on any of the $\Theta\setminus\Theta_i$. First, there must exist $i_0$ such that $\pi^\ast(\Theta_{i_0})>0$, so that the $\pi^\ast(\Theta_i)>0$ for any $i\geq i_0$ and the sequence $(\Theta_i)_{i\geq i_0}$ can be considered instead of the initial one. Second, for any $i$, if $\Theta\setminus\Theta_i$ is non-empty then it has a non-empty interior, as a consequence of the definition of openly increasing sequences of compact sets. Thus, as $\pi^\ast$ is assumed to be positive, $\pi^\ast(\Theta\setminus\Theta_i)$ is non-null. Hence the result.

\subsection{Proof of Theorem \ref{thm:quasipostpropre}}

  Consider an openly increasing sequence of compact sets $(\Theta_i)_{i\in\NN}$ that covers $\Theta$. Let $i\in\NN$, for $\pi_i$ ---associated to its probability density on $\Theta_i$--- to be a reference prior over $\cP_i$, it must maximize the function $l$ under the constraints $\int_{\Theta_i}\pi_i=1$ and $\int_{\Theta_i}\pi_i g=c_i$. Thus, according to Proposition \ref{prop:constraints}, $\pi_i$ can be written:
     \begin{equation}
         \pi_i(\theta)\propto J(\theta)(\lambda_i^{(1)}+\lambda_i^{(2)}g(\theta))^{1/\alpha},
     \end{equation}
 for some $\lambda_i^{(1)}$ and $\lambda_i^{(2)}$.
 Considering, if required, a subsequence of $(\pi_i)_{i\in\NN}$, we can assume that $\lambda_i^{(1)}$ and $\lambda_i^{(2)}$ have constant signs.
 Also, by convexity we write:%
 \newcommand{\li}[1]{\lambda_i^{(#1)}}%
     \begin{align}\label{eq:convexityConsCpig}
         c_i^\alpha  & \geq \int_{\Theta_i}J(\theta)(\li1+\li2g(\theta))g(\theta)d\theta\left(\int_{\Theta_i}Jg\right)^{\alpha-1}  \\
         &\geq \li1\left(\int_{\Theta_i}Jg\right)^{\alpha} + \li2\frac{\int_{\Theta_i}Jg^{2} }{\left(\int_{\Theta_i}Jg\right)^{1-\alpha}},\nonumber
     \end{align}
    and
    \begin{align}\label{eq:convexityConspi1}
        1 & \geq \int_{\Theta_i}J(\theta)(\li1+\li2g(\theta))d\theta\left(\int_{\Theta_i}J\right)^{\alpha-1} \\
        &\geq \li1\left(\int_{\Theta_i}J\right)^\alpha + \li2 \frac{\int_{\Theta_i}Jg}{\left(\int_{\Theta_i}J\right)^{1-\alpha}}  . \nonumber
    \end{align}
We want to identify the possible subsequential limits of $\li1$. 
We assume that one exists in $(0,\infty]$, we can consider if required a subsequence to assume that ii is the actual the limit of $\li1$. 
This way, Equation (\ref{eq:convexityConspi1}) does not allow $\li2$ to be non-negative.
Thus, $\li2$ is negative and by convexity,
    \begin{equation}\label{eq:convixity2l2leq0}
        c_i^\alpha \geq \li1\left(\int_{\Theta_i} Jg\right)^\alpha +\li2 \left(\int_{\Theta_i} Jg^{1+1/\alpha}\right)^\alpha .
    \end{equation}
 Given that $(c_i)_i$ is bounded, and that $\int_{\Theta_i}Jg\conv{i\rightarrow\infty}\infty$ while $\int_{\Theta_i}Jg^{1+1/\alpha}$ has a finite limit, we deduce that $\li2$ diverges to $-\infty$.
 However, to ensure the expression of $\pi_i$ to be well defined, we must have $\li1+\li2g(\theta)\geq0$ for any $i$ and any $\theta$. As $g$ is non-null, it necessary comes that $(\li2/\li1)_i$ is a  bounded sequence.
Therefore, 
$\li2 \aseq{i\rightarrow\infty} o\left(\li1\left(\int_{\Theta_i}Jg\right)^\alpha\right) $
and the right hand side in Equation (\ref{eq:convixity2l2leq0}) diverges to $\infty$, which is a contradiction with $(c_i)_i$ being bounded.

As a consequence, any subsequential limit of $(\li1)_i$ must belong to $[-\infty,0]$. Without changing the notations, we can now assume that $(\li1)_i$ has a limit, which we suppose to be strictly negative firstly, also, $(\li1)_i$ can then be assumed to be always negative.
 Thus, still to ensure that $\li1+\li2g(\theta)\geq0$ for any $i$, $\theta$, and reminding that $g(\theta)\conv{\theta\rightarrow b}0$ we deduce $\frac{\li2}{|\li1|}\conv{i\rightarrow\infty}\infty$.
 Then, 
 the sequence of functions $\left(\pi_i/\li2\right)_i$ converges pointwisely to $\pi^\ast$, defined by $\pi^\ast(\theta)\propto J(\theta)g(\theta)^{1/\alpha}$ with  $|\frac{1}{\li2}\pi_i(\theta)|\leq C+g(\theta)$ for some $C$, with $g$ being continuous. %
According to \cite[Proposition 2.15]{Bioche2016}, what precedes implies that the sequence of priors $(\pi_i)_i$ converges Q-vaguely to $\pi^\ast$.

Eventually, if $(\li1)_i$ admits $0$ as a subsequential limit, 
we can assume that it is its actual limit.
Consequently,  the sequence of functions $(\pi_i/\li2)_i$ converges pointwisely to the density $\pi^\ast$ while being bounded  by $g$. Therefore, the sequence of priors $(\pi_i)_i$ converges Q-vaguely to $\pi^\ast$.
In any cases, $\pi^\ast$ is a quasi-reference prior over $\overline\cP'$.

\section{Conclusion}\label{sec:conclusion}

Prior elicitation remains an open topic in Bayesian analysis, with no solution satisfying all criteria (objectivity, computational feasibility, property, ...) simultaneously. 
In this work, we have focused on the objectivity and exploitability of priors.
Specifically, we have provided some solutions for practitioners seeking non-subjective priors, addressing the practical challenges of traditional reference priors, particularly in terms of complexity and proper aspect. Through the example we presented, we demonstrated that simple criteria can lead to the construction of reference priors that are either straightforward to formulate or proper, depending on the case.
Moreover, our results show that one cannot completely dispense with a consideration of the asymptotic properties of Jeffreys prior. This emphasizes the importance of studying this prior for a complete construction of objective priors.

Finally, this paper paves the way for several future research directions. Some studies might focus on the numerical computations of certain priors that we suggest, especially those derived from Theorem \ref{thm:lintoproper}, leveraging methods that already approximate usual reference priors in the literature, e.g. \citep{Berger2009,LeTriMinh2014}. Others might explore the flexible nature of the constraints initially imposed on the prior, using our formalism to define new constraints and reference priors tailored to their specific motivations.

\begin{acks}[Acknowledgments]
The author would like to thank his advisors Cyril Feau (Université Paris-Saclay, CEA) and Josselin Garnier (CMAP, CNRS, École polytechnique) for their advice and suggestions.
\end{acks}

\begin{acks}[Declaration of interest]
The author reports there are no competing interests to declare.
\end{acks}

\begin{acks}[Author's full addresses]
Correspondences to the author at: CMAP, CNRS, École polytechnique, Institut Polytechnique de Paris, 91120 Palaiseau, France; and at: Université Paris-Saclay, CEA, Service d'Études Mécaniques et Thermiques, 91191 Gif-sur-Yvette, France.
\end{acks}

%

\end{document}